\documentclass[conference]{IEEEtran}

\usepackage{amsmath,amssymb,amsfonts,amsthm}
\usepackage{booktabs}
\usepackage{cite}
\usepackage{graphicx}
\usepackage{url}
\usepackage{tabularx}
\usepackage{mathtools}
\usepackage{hyperref}
\usepackage{xcolor}
\usepackage{array}
\usepackage{multirow}
\usepackage{algorithm}
\usepackage{comment}
\usepackage[noend]{algpseudocode}
\usepackage{tikz}
\usetikzlibrary{arrows.meta,positioning,calc,decorations.pathreplacing, shapes.geometric,shapes.misc}
\usepackage{pgfplots}
\pgfplotsset{compat=1.17}

\newtheorem{definition}{Definition}
\newtheorem{lemma}{Lemma}
\newtheorem{proposition}{Proposition}
\newtheorem{theorem}{Theorem}
\newcommand{\Adv}{\mathsf{Adv}}
\newcommand{\Enc}{\mathsf{Enc}}
\newcommand{\Dec}{\mathsf{Dec}}
\newcommand{\Send}{\mathsf{Send}}
\newcommand{\Recv}{\mathsf{Recv}}
\newcommand{\Init}{\mathsf{Init}}
\newcommand{\DATA}{\mathsf{DATA}}
\newcommand{\DUMMY}{\mathsf{DUMMY}}
\newcommand{\FIN}{\mathsf{FIN}}
\newcommand{\ACK}{\mathsf{ACK}}
\newcommand{\Rand}{\mathsf{Rand}}

\newcommand{\sys}{\textsc{BiFEP}}
\newcommand{\dgsys}{\textsc{Bi-DG-FEP}}
\newcommand{\nActiveStreamTrials}{360}

\newcommand{\nDgAssertions}{3,328}
\newcommand{\nDgCases}{1,532}

\newcommand{\nDgPerfGood}{420}
\newcommand{\nDgPerfLatUs}{1.36}

\newcommand{\nLoadExpMaxUtil}{1.09}

\newcommand{\nLossAckFive}{$\bot$\,/\,4}
\newcommand{\nLossAckFour}{$\bot$\,/\,4}
\newcommand{\nLossAckOne}{4\,/\,4}
\newcommand{\nLossAckSix}{$\bot$\,/\,4}
\newcommand{\nLossAckThree}{$\bot$\,/\,4}
\newcommand{\nLossAckTwo}{$\bot$\,/\,4}
\newcommand{\nLossLingerAckFive}{12\,/\,8}
\newcommand{\nLossLingerAckFour}{12\,/\,8}
\newcommand{\nLossLingerAckOne}{8\,/\,8}
\newcommand{\nLossLingerAckSix}{$\bot$\,/\,8}
\newcommand{\nLossLingerAckThree}{12\,/\,8}
\newcommand{\nLossLingerAckTwo}{12\,/\,8}
\newcommand{\nLossLingerOneFive}{12\,/\,12}
\newcommand{\nLossLingerOneFour}{12\,/\,12}
\newcommand{\nLossLingerOneOne}{8\,/\,8}
\newcommand{\nLossLingerOneSix}{16\,/\,16}
\newcommand{\nLossLingerOneThree}{12\,/\,12}
\newcommand{\nLossLingerOneTwo}{12\,/\,12}
\newcommand{\nLossLingerSymFive}{12\,/\,12}
\newcommand{\nLossLingerSymFour}{12\,/\,12}
\newcommand{\nLossLingerSymOne}{8\,/\,8}
\newcommand{\nLossLingerSymSix}{16\,/\,16}
\newcommand{\nLossLingerSymThree}{12\,/\,12}
\newcommand{\nLossLingerSymTwo}{12\,/\,12}
\newcommand{\nLossLingerSymZero}{8\,/\,8}
\newcommand{\nLossLingerThreeAckFive}{20\,/\,16}
\newcommand{\nLossLingerThreeAckFour}{20\,/\,16}
\newcommand{\nLossLingerThreeAckOne}{16\,/\,16}
\newcommand{\nLossLingerThreeAckSix}{24\,/\,16}
\newcommand{\nLossLingerThreeAckThree}{20\,/\,16}
\newcommand{\nLossLingerThreeAckTwo}{20\,/\,16}
\newcommand{\nLossLingerThreeOneFive}{20\,/\,20}
\newcommand{\nLossLingerThreeOneFour}{20\,/\,20}
\newcommand{\nLossLingerThreeOneOne}{16\,/\,16}
\newcommand{\nLossLingerThreeOneSix}{24\,/\,24}
\newcommand{\nLossLingerThreeOneThree}{20\,/\,20}
\newcommand{\nLossLingerThreeOneTwo}{20\,/\,20}
\newcommand{\nLossLingerThreeSymFive}{20\,/\,20}
\newcommand{\nLossLingerThreeSymFour}{20\,/\,20}
\newcommand{\nLossLingerThreeSymOne}{16\,/\,16}
\newcommand{\nLossLingerThreeSymSix}{24\,/\,24}
\newcommand{\nLossLingerThreeSymThree}{20\,/\,20}
\newcommand{\nLossLingerThreeSymTwo}{20\,/\,20}
\newcommand{\nLossLingerThreeSymZero}{16\,/\,16}
\newcommand{\nLossLingerTwoAckFive}{16\,/\,12}
\newcommand{\nLossLingerTwoAckFour}{16\,/\,12}
\newcommand{\nLossLingerTwoAckOne}{12\,/\,12}
\newcommand{\nLossLingerTwoAckSix}{20\,/\,12}
\newcommand{\nLossLingerTwoAckThree}{16\,/\,12}
\newcommand{\nLossLingerTwoAckTwo}{16\,/\,12}
\newcommand{\nLossLingerTwoOneFive}{16\,/\,16}
\newcommand{\nLossLingerTwoOneFour}{16\,/\,16}
\newcommand{\nLossLingerTwoOneOne}{12\,/\,12}
\newcommand{\nLossLingerTwoOneSix}{20\,/\,20}
\newcommand{\nLossLingerTwoOneThree}{16\,/\,16}
\newcommand{\nLossLingerTwoOneTwo}{16\,/\,16}
\newcommand{\nLossLingerTwoSymFive}{16\,/\,16}
\newcommand{\nLossLingerTwoSymFour}{16\,/\,16}
\newcommand{\nLossLingerTwoSymOne}{12\,/\,12}
\newcommand{\nLossLingerTwoSymSix}{20\,/\,20}
\newcommand{\nLossLingerTwoSymThree}{16\,/\,16}
\newcommand{\nLossLingerTwoSymTwo}{16\,/\,16}
\newcommand{\nLossLingerTwoSymZero}{12\,/\,12}
\newcommand{\nLossOneFive}{8\,/\,8}
\newcommand{\nLossOneFour}{8\,/\,8}
\newcommand{\nLossOneOne}{4\,/\,4}
\newcommand{\nLossOneSix}{12\,/\,12}
\newcommand{\nLossOneThree}{8\,/\,8}
\newcommand{\nLossOneTwo}{8\,/\,8}
\newcommand{\nLossSymFive}{8\,/\,8}
\newcommand{\nLossSymFour}{8\,/\,8}
\newcommand{\nLossSymOne}{4\,/\,4}
\newcommand{\nLossSymSix}{12\,/\,12}
\newcommand{\nLossSymThree}{8\,/\,8}
\newcommand{\nLossSymTwo}{8\,/\,8}
\newcommand{\nLossSymZero}{4\,/\,4}

\newcommand{\nPerfBdEpochs}{7,680}
\newcommand{\nPerfBdGood}{28.2}
\newcommand{\nPerfBdInputPerEpoch}{1,100}
\newcommand{\nPerfBdLat}{37.0}
\newcommand{\nPerfBdOutputPerEpoch}{2,200}
\newcommand{\nPerfBdThpt}{56.4}

\newcommand{\nSchedLatMax}{234}
\newcommand{\nSchedLatMin}{20}

\newcommand{\nSourceHashBreak}{\texttt{94f9c56fb94d8911f7a5fc4f45f4a857}\allowbreak\texttt{ff5f78ff8669d31e87c5c88b8739b84f}}

\newcommand{\nStreamAssertionsTotal}{62,119}

\newcommand{\nStreamTrials}{491}
\newcommand{\nSweepAssertions}{442}
\newcommand{\nSweepPoints}{93}

\newcommand{\nTimedJitterRange}{4.2--6.7}

\newcommand{\nTimedMissRange}{77--89}

\newcommand{\nTimedOverFive}{56}
\newcommand{\nTimedOverTen}{21}

\newcommand{\nTotalAssertions}{65,889}

\title{Closing the Loop: Bidirectional Fully Encrypted Protocols}

\author{\IEEEauthorblockN{Baigang Chen}
	\IEEEauthorblockA{University of Minnesota\\
		chen9464@umn.edu}
	\and
	\IEEEauthorblockN{Nicholas Hopper}
	\IEEEauthorblockA{University of Minnesota\\
		hoppernj@umn.edu}}
        
\IEEEoverridecommandlockouts
\makeatletter\def\@IEEEpubidpullup{6.5\baselineskip}\makeatother
\IEEEpubid{\parbox{\columnwidth}{
Network and Distributed System Security (NDSS) Symposium 2027\\
22--26 March 2027, Seoul, Republic of Korea\\
ISBN 978-1-970672-09-1\\
https://dx.doi.org/10.14722/ndss.2027.240467\\
www.ndss-symposium.org
}
\hspace{\columnsep}\makebox[\columnwidth]{}}

\begin{document}
\maketitle

\begin{abstract}
Fully encrypted protocols (FEPs) provide encrypted channels that make all protocol-generated bytes computationally indistinguishable from uniform random strings. Several previous works have explored security definitions and constructions of unidirectional FEPs: protocols in which one party acts only as a sender, and the other acts only as a receiver.  However, most applications require two-way information exchange, and a network adversary can observe communication in both directions and their shared lifetime.  Because the semantics of bidirectional channels involve more complex shared state, it is possible that the ``na\"ive'' composition of two unidirectional channels can result in a two-way protocol that can be detected based on dependencies between the two directions, such as traffic imbalance, channel closure, failures, or connection tear-down.

To address this issue, we introduce new formal security definitions for bidirectional FEPs that capture exact shaping, delivery, protocol-state integrity, private half-close, and cross-direction isolation, while revealing a public ``sending schedule'' and ``closing epoch'' that may be randomized.  We show that the trivial composition fails to meet these definitions, leading to practical detection attacks.  We then construct provably secure bidirectional FEPs (BiFEPs) for both the datastream and datagram settings. For datastream, we combine two direction-separated FEPs with a ``wrapper'' layer that prevents detection based on the mismatch between uni- and bi-directional connection states.  For datagram, we add encrypted DATA/FIN/ACK with replay protection and loss-tolerant close. We validate the design through a Rust implementation and show that none of the surveyed deployed protocols provides the full set of BiFEP security properties.

\end{abstract}

\IEEEpeerreviewmaketitle

\section{Introduction}
\label{sec:intro}

Encryption hides message contents, but not necessarily the protocol carrying them. For instance, TLS and QUIC retain recognizable handshakes and other structured features that contribute to a protocol’s observable wire fingerprint~\cite{rfc8446,rfc9000,rfc9312}. An observer can use this fingerprint to classify, throttle, or block traffic without decrypting it. Obfuscated transports such as obfs4 and Shadowsocks seek to remove fixed protocol markers by making all protocol-controlled bytes appear random~\cite{angel2014obfs4,shadowsocks2022sip022}. Random-looking bytes alone, however, do not make a connection unidentifiable: message lengths, timing, and responses to active probes can remain distinctive, and some censors specifically target high-entropy traffic~\cite{wu2023great,wang2015seeing}. Protocol mimicry is likewise fragile when its imitation differs detectably from the target protocol~\cite{houmansadr2013parrot}. These limitations motivate a precise separation between cryptographically hiding protocol-generated bytes and controlling the metadata that remains visible.

Fenske and Johnson formalized this goal with \emph{fully encrypted protocols}
(FEPs), which make protocol-generated bytes indistinguishable from uniform
strings of the same public length~\cite{fenske2024bytes}. Their definitions,
however, model only one direction at a time. Two secure FEPs run in parallel
can therefore leak session structure through their interaction. In a
request--response exchange, for example, a na\"ive composition may cause the request direction to go silent when the requester half-closes while the response direction remains active, exposing that private event although every transmitted byte looks random. Cross-direction
reactions can similarly enable active probes. We formalize and measure this
gap (\S\ref{sec:defs}, \S\ref{sec:eval-close-leak}), motivating a single
bidirectional security object that jointly governs shaping, failure, and close
behavior.

We construct {\em bidirectional} FEPs (BiFEPs) for both datastreams and datagrams. The core difficulty in recognizing this extension is allowing the connection to close without revealing the private events that made it ready to close. If an authenticated FIN or a decryption failure immediately closes the transport, an observer learns when the private protocol state changed. An active adversary can also use this response as a probe. A protocol that never closes avoids this leakage but is not practical. Our datastream construction separates private close readiness from the public close time. Each endpoint continues the public traffic schedule after a half-close, and the transport closes only at the next epoch in a public schedule. The observer learns a quantized close time, but not the exact time of the private close event. This design treats termination as part of the secure channel~\cite{boyd2017secure}.

Datagrams require a separate construction because UDP may lose, reorder, or duplicate packets \cite{postel1980rfc0768}. A datastream FIN is drained once its ciphertext prefix leaves the sender, but a transmitted datagram may never arrive. Treating transmission as delivery could therefore cause the two endpoints to reach inconsistent close states. Our datagram construction places encrypted FIN and ACK bits inside authenticated, scheduled datagrams. It retransmits these control bits idempotently and requires authenticated evidence from both directions before closing.
After becoming ready to close, an endpoint remains live through a public number of additional close buckets and retransmits its acknowledgment. A later ACK-bearing datagram can therefore repair an isolated loss. Packet loss may prevent termination, but it cannot create false close readiness.

\textbf{Contributions.}
\begin{itemize}
\item \textbf{Security framework and proofs.}
We formalize new security conditions for the bidirectional case, including correctness, traffic shaping, passive and active security, protocol-state integrity, and private close.   We show that trivial composition of two uni-directional FEPs does not satisfy these security goals.

\item \textbf{Bidirectional constructions.}
We construct BiFEPs for both datastreams and datagrams, providing exact per-direction traffic schedules and encrypted close coordination tailored to each transport model.  We prove both constructions secure while exposing only the public schedule and quantized close epochs.

\item \textbf{Implementation and evaluation.}
We implement both constructions in Rust and evaluate their behavior under active modification, replay, loss, and timing variation. We also compare TLS~1.3~\cite{rfc8446}, QUIC~v1~\cite{rfc9000,rfc9312,thomson2021rfc}, WireGuard~\cite{donenfeld2017wireguard}, Shadowsocks~2022~\cite{shadowsocks2022sip022}, obfs4~\cite{angel2014obfs4}, and Tor~\cite{torproject2026specs} at the specification level. None provides all bidirectional FEP properties. We further quantify the costs and practical constraints of cover traffic.
\end{itemize}

\section{Background}
\label{sec:background}

\textbf{FEP security notions.}
Fenske and Johnson formalize FEP security for both datastreams and datagrams~\cite{fenske2024bytes}. Passive security replaces each sender output with a uniform random string of the same length and requires indistinguishability of the transcripts. The active datastream experiment also exposes a receiver oracle and tracks whether the adversary-delivered stream remains a prefix of the honestly shown stream. Oracle outputs are suppressed while synchronized; after the first deviation, the ideal experiment ceases decryption, so any real-world semantic output is a forgery.
Datastream \emph{traffic shaping} requires exactly the requested number of bytes per sender call, making length and timing explicit inputs. The corresponding datagram notion (FEP-CCA) covers atomic packets with chosen-ciphertext access. Channel-security work under fragmentation~\cite{fischlin2015data,boldyreva2012security} and for bidirectional channels~\cite{marson2017security} studies related correctness and integrity questions without additionally requiring a random wire image.

\textbf{The datastream FEP.}
Our bidirectional datastream construction uses the datastream FEP of~\cite{fenske2024bytes} as its inner layer. One abstract record with body $u$ consumes two AEAD sequence numbers: with record counter $r$ and key $k$,
\begin{equation}
\begin{aligned}
 H_r&=\Enc_k\!\left(2r,\ \mathsf{BE}_{16}(|C_r|)\right),\\
 C_r&=\Enc_k\!\left(2r+1,\ \mathsf{BE}_{16}(\ell_p)\,\Vert\,0^{\ell_p}\,\Vert\,u\right),
\end{aligned}
 \label{eq:inner}
\end{equation}
where $\mathsf{BE}_{w}$ denotes a $w$-bit big-endian encoding, so $\mathsf{BE}_{16}$ occupies two bytes, and $\ell_p$ is an inner-layer padding length. The sender buffers generated ciphertext and releases a prefix of the requested length on each invocation. In \S\ref{sec:stream}, $\mathsf{UD.Enc}_k(r,u)$ denotes this sender's complete record-encoding step; the bidirectional wrapper owns the ciphertext queue and scheduled prefix release. Because the wrapper supplies authenticated cover, we set $\ell_p\equiv 0$ and reject nonzero padding lengths. With AES-256-GCM, the inner per-record overhead is $c_{\mathrm{enc}}=36$ bytes. Under a length-additive IND\$-CPA and INT-CTXT AEAD, the layer provides datastream shaping and active FEP security~\cite{fenske2024bytes}.

\textbf{The atomic datagram FEP.}
The datagram FEP of~\cite{fenske2024bytes} carries a fresh transmitted nonce in every packet and encrypts either a null message $\top$ (chaff) or an application message. If the requested length cannot hold a nonce and tag, the sender emits uniform bytes and the receiver returns $\top$, so short traffic has no semantic output in either world. We use this design as our atomic layer (\S\ref{sec:datagram}) and add bidirectional semantics above it.

\section{Model, Scope, and Goals}
\label{sec:model}

\subsection{Endpoints, epochs, and schedules}
Two trusted endpoints \(A\) and \(B\) share direction-separated symmetric keys established before the FEP session begins. Time is divided into logical \emph{epochs}. In each epoch $t$, each endpoint $X\in\{A,B\}$ accepts at most one application input, emits scheduled traffic toward its peer $\overline X$, processes traffic received from $\overline X$, and may report visible close. Let $\mathcal M=\{0,1\}^*$ be the application-message space. The input is $\mu\in\mathcal M\cup\{\bot,\mathsf{closeReq}\}$. An input $m\in\mathcal M$ is application data and may be the empty message $\epsilon$; $\bot$ means that the application supplies no event in the epoch; and $\mathsf{closeReq}$ requests a local half-close. The close request is an application-level event that causes the protocol to generate an encrypted FIN. 

A public schedule $\Gamma_X(t)$ fixes the number of protocol-payload bytes $X$ emits while live: for datastreams, a byte-stream prefix that TCP may segment arbitrarily; for datagrams, the exact payload length of one UDP datagram. The pair $(\Gamma_A,\Gamma_B)$ need not be identical, and is preferred to be divergent and randomized for anti-fingerprinting.

The schedule may be fixed or sampled at session setup, but its generator must be independent of application content, sizes, queue occupancy, FIN, authentication failures, and every other session secret. Sampling a fresh schedule per session avoids a single constant pattern and permits a distribution designed to resemble cover traffic~\cite{li2018measuring,holland2022regulator,holland2024detorrent,witwer2022padding}. It does not by itself prevent fingerprinting: lengths, directions, packet counts, and timing remain observable, and the schedule distribution may itself be distinctive. Uniform-looking payloads can also form a detectable traffic class~\cite{wu2023great}. Epoch boundaries and the generator profile are fixed public protocol parameters; $L_{\mathrm{BD}}$ below records the realized per-session lengths and close outputs. Our theorems establish payload security conditioned on these public values. Selecting a statistically resistant schedule distribution is separate (\S\ref{sec:related}).

Let $\mathcal E_{\mathrm{cl}}$ denote the public set of epochs at which visible closure is permitted. The close grid may be fixed in advance or sampled during session setup, provided that its distribution is independent of application data and private protocol state. If no permitted epoch remains after an endpoint becomes ready, its visible-close output is $\bot$. Unlike the per-direction emission schedules $\Gamma_A$ and $\Gamma_B$, the grid is shared: an emission schedule governs one direction, whereas visible closure terminates the session as a whole. Thus, endpoints that become ready within the same epoch can close simultaneously.

Endpoint-specific grids $\mathcal E^A_{\mathrm{cl}}\neq\mathcal E^B_{\mathrm{cl}}$ would fit the definitions unchanged, since the close output of Eq.~\eqref{eq:streamclose} is endpoint-local, but they would make one-directional tails structural: between the two closure epochs one endpoint is silent while the other pays its full schedule, and in the datagram case an earlier closure can strand its peer (\S\ref{sec:eval-loss}).

For endpoint $X$, the close leakage $e_X^\star$ is $\bot$ if $X$ never becomes ready or no later close bucket exists. Otherwise it is fixed by the selected construction's public close rule: the datastream closes at the first epoch in $\mathcal E_{\mathrm{cl}}$ at or after readiness, whereas the datagram construction defers close by the fixed public \emph{linger depth} $L\in\mathbb N$ of further permitted buckets (\S\ref{sec:datagram}). The endpoints may realize different close epochs when readiness occurs on opposite sides of a bucket boundary because of asymmetric FIN drain or unequal delivery of FIN/ACK evidence (\S\ref{sec:eval-loss}).

\subsection{Adversary}

The adversary observes both directions and may delay, drop, fragment, coalesce, insert, delete, replay, reflect, reorder, and modify traffic arbitrarily. It sees the full transport/network wire image: addresses, ports, TCP/UDP headers, lengths, directions, timing, packet counts, and transport termination. The FEP claim covers exactly the protocol-controlled payload bytes, conditioned on the declared leakage
\begin{equation}
 L_{\mathrm{BD}}=(\Gamma_A,\ \Gamma_B,\ \mathcal E_{\mathrm{cl}},\
 e_A^\star,\ e_B^\star),
 \label{eq:leakage}
\end{equation}
where $e_X^\star$ is $X$'s realized close bucket (or $\bot$). TLS, QUIC, and other deployed transports intentionally expose selected protocol structure on the wire~\cite{rfc8446,rfc9000,rfc9312} and do not aim to provide the FEP guarantees studied here. 

\subsection{Goals}
\label{sec:properties}

We require:
(i)~\emph{correctness}: reliable in-order datastream delivery of admitted data; for datagrams, authentic DATA accepted before peer FIN is delivered atomically and at most once, while loss and reordering across FIN are availability events;
(ii)~\emph{shaping}: exact public output lengths in both directions in every live epoch, including idle ones;
(iii)~\emph{passive security}: live protocol bytes indistinguishable from uniform;
(iv)~\emph{protocol-state integrity}: only authentic peer actions may produce DATA delivery or advance the close state; forged DATA, forged FIN/ACK, DATA after FIN, and premature close are rejected;
(v)~\emph{private half-close and scheduled close}: half-close invisible; full close visible only in $\mathcal E_{\mathrm{cl}}$.

An active adversary may mount a denial-of-service attack by dropping or modifying traffic, thereby delaying delivery, stalling a direction, or preventing termination. Such interference does not violate integrity unless it causes forged data, control events, or close transitions to be accepted. Accordingly, our security guarantees exclude availability while preserving authenticity under denial of service.

\subsection{Security notions}
\label{sec:defs}

A bidirectional protocol is a tuple $\Pi=(\Init,\Send_A,\Send_B,\Recv_A,\Recv_B)$. With security parameter $\lambda$, $\Init(1^\lambda,\Gamma_A,\Gamma_B,\mathcal E_{\mathrm{cl}})$ returns two endpoint states; $L$ and all session bounds are fixed public protocol parameters. A send call $\Send_X(t,\mathsf{st}_X,\mu)$ returns an updated state, one scheduled ciphertext of length $\Gamma_X(t)$, and a local close bit. A receive call $\Recv_X(t,\mathsf{st}_X,c)$ returns an updated state, a delivery output (a list for datastreams and $x$ or $\bot$ for datagrams), and a close bit. In each epoch the sends run first, each receive uses its endpoint's post-send state, and only the combined close bits are applied absorbingly afterward (Algorithm~\ref{alg:epoch}).

\begin{definition}[Passive BiFEP security]
\label{def:passive}
After fixing the public schedules and close grid, a PPT adversary adaptively
supplies both endpoints' $\Send$\ inputs at each epoch. Let $V_0$ be its view of the honest
execution. In $V_1$, the same leakage $L_{\mathrm{BD}}$ is
shown, but $X$'s epoch-$t$ output is an independent uniform string of length
$\Gamma_X(t)$ when $e_X^\star=\bot$ or $t\le e_X^\star$, and is $\epsilon$
afterward. The protocol is passively secure if
every PPT adversary distinguishes $V_0$ from $V_1$ with only negligible
advantage. See Appendix~\ref{app:games}, Algorithm~\ref{alg:app-passive-epoch},
for the full experiment.
\end{definition}

\begin{definition}[Active BiFEP security]
\label{def:active}
The active game extends Definition~\ref{def:passive} by letting the adversary
modify and deliver traffic in both directions. The ideal world processes only
honest sender output: a datastream direction stops after its first deviation,
whereas each datagram is checked atomically. The event $\mathsf{Bad}$ records
accepted forged or replayed DATA/FIN/ACK, DATA after FIN, nonempty datastream
output after deviation, or premature close. We call $\Pi$ actively secure if
both the distinguishing advantage and
$\Pr[\mathsf{Bad}=1\mid b=0]$ are negligible, where $b=0$ denotes the real
experiment. Appendix~\ref{app:games}
specifies the full oracles and integrity monitor
(Algorithms~\ref{alg:app-active-stream-game}--\ref{alg:app-active-dg-recv}).
\end{definition}

The key difference from~\cite{fenske2024bytes} is who controls output lengths. In the unidirectional games, the adversary requests each length, so the definition protects byte contents but not application-dependent length patterns or per-channel close behavior. In Definition~\ref{def:passive}, lengths instead follow the public schedule until each leaked full-close epoch, jointly constraining both directions and hiding earlier local half-closes.

\begin{proposition}[Naive-composition separation]
\label{prop:naive}
Two channels can each satisfy the unidirectional FEP definitions while their
parallel composition fails passive BiFEP security.
\end{proposition}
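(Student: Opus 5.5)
The separation will come from an explicit counterexample. We take two instances of the datastream FEP of~\cite{fenske2024bytes}, one per direction and with independent direction-separated keys, and run them in parallel. Each instance satisfies the unidirectional passive, active, and shaping definitions by the cited result, assuming a length-additive IND\$-CPA and INT-CTXT AEAD. The composition is the natural one: each endpoint runs its sender with length $\Gamma_X(t)$ while the channel is open. When it receives $\mathsf{closeReq}$, it emits its FIN and stops sending in that direction, which is exactly what the unidirectional semantics allow, since there the lengths are chosen by the caller. Meanwhile the other direction keeps following its own schedule. We never need to touch unidirectional security. Because the unidirectional games let the adversary choose every length, including zero after a close, nothing in those definitions constrains this behavior.

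Next we build a passive BiFEP distinguisher. We fix a public close grid $\mathcal E_{\mathrm{cl}}$ and schedules with $\Gamma_A(t)>0$ for all $t$. The adversary supplies $\mathsf{closeReq}$ to $A$ at some epoch $t_0$ that lies strictly before the next grid point $e\in\mathcal E_{\mathrm{cl}}$. It never closes $B$. Then $B$ never becomes ready, and $A$ does not become fully ready in the bidirectional sense, so $e_A^\star$ is either $\bot$ or some $e\ge t_0+1$. In both cases Definition~\ref{def:passive} makes $V_1$ contain a string of length exactly $\Gamma_A(t_0+1)>0$ at epoch $t_0+1$. In $V_0$, by contrast, the naive composition emits $\epsilon$ at that epoch, or at the latest a short FIN record followed by silence. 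So the adversary outputs $0$ exactly when $A$'s output at epoch $t_0+1$ has length different from $\Gamma_A(t_0+1)$. Its advantage is $1$. This formalizes the half-close leak described in the introduction.

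The step that needs the most care is pinning down $e_A^\star$ and the lengths in $V_1$ for a protocol that does not implement the paper's close rule. The leakage $L_{\mathrm{BD}}$ is defined through "the selected construction's public close rule". We therefore have to fix how readiness and $e_X^\star$ are read for the naive composition: readiness requires both FINs, or alternatively $e_A^\star=\bot$ because the naive protocol never reports visible close. We then check that under either reading the ideal view still shows full-length output at $t_0+1$. If that turns out awkward, the fallback is a simpler distinguisher that uses no close at all. Let the naive composition send exactly the queued ciphertext, so idle epochs produce length $0$, which is legal in the unidirectional game. Then supplying $\bot$ inputs gives length mismatches against $\Gamma_X(t)$ in $V_0$, while $V_1$ always matches the schedule.
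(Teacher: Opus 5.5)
Your proposal is correct and takes the same route as the paper. The paper also uses a concrete half-close witness: $\Gamma_A=\Gamma_B=1024$, $\mathcal E_{\mathrm{cl}}=\{8,16,\ldots\}$, $A$ half-closes at epoch~2, and both endpoints are ready by epoch~8; it distinguishes by testing $|c_{A\to B}(3)|=0$ against the 1{,}024 bytes that Definition~\ref{def:passive} requires through the leaked close bucket, whereas your choice never to close $B$ forces $e_A^\star=\bot$ under either reading you give of the naive protocol's close bits, which settles the $e_X^\star$ question you raise (just fix the naive sender to drain its FIN within epoch $t_0$, or test at $t_0+2$, so the tested epoch is certainly silent).
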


\begin{proof}
Let $\Gamma_A(t)=\Gamma_B(t)=1024$ and
$\mathcal E_{\mathrm{cl}}=\{8,16,\ldots\}$. After sending its request in
epoch~2, $A$ half-closes while $B$'s response continues through epoch~6;
both are ready to close by epoch~8. A naive pair of unidirectional FEPs may
emit nothing from $A\to B$ after epoch~2, which their individual games permit
because the requested length is zero in both worlds. Testing
$|c_{A\to B}(3)|=0$ therefore reveals $A$'s half-close. In contrast,
Definition~\ref{def:passive} requires 1,024 bytes in both directions through
the leaked epoch-8 close bucket, hiding the event.
\end{proof}

\section{Bidirectional Datastream Construction}
\label{sec:stream}

We define the bidirectional datastream construction
\[
\Pi_{\mathrm{BD}}
=
(\Init,\Send_A,\Send_B,\Recv_A,\Recv_B)
\]
under the syntax of \S\ref{sec:model} and Appendix~\ref{app:games}. For each direction $X\to Y$, the construction applies an authenticated wrapper around an independently keyed unidirectional datastream FEP. The wrapper encodes application data, cover traffic, and close control as encrypted objects; the inner FEP converts the resulting object stream into ciphertexts of the requested length.

The wrapper uses a nonce-based AEAD scheme $(\Enc,\Dec)$ satisfying IND\$-CPA and INT-CTXT security. Let $\nu$ be an injective map from 64-bit
sequence numbers to AEAD nonces. We use
\[
\Enc_{ak}(s,m):=\Enc_{ak}(\nu(s),m)
\]
as shorthand, and define $\Dec_{ak}(s,\cdot)$ analogously. The wrapper sequence $s$ and inner record counter $r$ are bounded public session counters; admissible sessions end before either value repeats.

The inner layer is the unidirectional datastream FEP
$\Pi_{\mathrm{UD}}$ of Eq.~\eqref{eq:inner}, with sender $\mathsf{UD.Enc}$ and receiver $\mathsf{UD.Recv}$. We instantiate it with a trivial close function
\(
C_{\mathrm{UD}}\equiv 0\), so the inner layer never produces a close event. In particular, FIN processing and visible termination are governed entirely by the bidirectional wrapper, while the inner layer continues to emit the length requested by the public schedule.

Let $L_{\mathrm{in}}$ and $L_{\mathrm{type}}$ denote, in bytes, the encoded-length and frame-type widths; let $L_{\max}$ be the maximum application-chunk length, $B_{\mathrm{rec}}$ the maximum inner-record body length, and $c_{\mathrm{aead}}$ and $c_{\mathrm{enc}}$ the AEAD and inner-record overheads. Define $c_{\mathrm{wrap}}=L_{\mathrm{in}}+L_{\mathrm{type}}+c_{\mathrm{aead}}$. We require
\begin{equation}
c_{\mathrm{wrap}}\le B_{\mathrm{rec}},\qquad
B_{\mathrm{rec}}-L_{\mathrm{in}}<2^{8L_{\mathrm{in}}},
\label{eq:dummyfit}
\end{equation}
so an authenticated empty DUMMY fits in one record and every protected-object length is representable.

\subsection{State and initialization}
\label{sec:stream-state}

\begin{definition}[Endpoint state]
\label{def:streamstate}
For direction $X\to Y$, the send state is
\[
\begin{aligned}
 \sigma^S_{X\to Y}=\bigl(&k,\,ak,\,r,\,s^S,\,\mathsf{buf},\,\mathsf{obuf},\\
 &\mathsf{out},\,\mathsf{finrem},\,\mathsf{finlim},\,\mathsf{wst}\bigr),
\end{aligned}
\]
where $k$ and $ak$ are the inner and wrapper keys; $r$ and $s^S$ are the next inner record and wrapper sequence numbers; $\mathsf{buf}$ holds protected wrapper plaintext not yet placed in an inner record; $\mathsf{obuf}$ holds generated inner ciphertext not yet released; $\mathsf{out}\in\mathbb N$ counts released ciphertext bytes; $\mathsf{finrem},\mathsf{finlim}\in\mathbb N\cup\{\bot\}$ track the FIN position; and $\mathsf{wst}\in\{\mathsf{open},\mathsf{finQueued},\mathsf{finDrained}\}$ is the write state. The receive state for the opposite direction is
\[
 \sigma^R_{Y\to X}=\bigl(\mathsf{st}_{\mathrm{UD}},\,ak,\,s^R,\,
 \mathsf{ibuf},\,\mathsf{rfin}\bigr),
\]
where $\mathsf{st}_{\mathrm{UD}}$ is the inner receiver state, $s^R$ the expected wrapper sequence, $\mathsf{ibuf}$ the reassembly buffer, and $\mathsf{rfin}$ the peer-FIN bit. Endpoint $X$'s state is $(\sigma^S_{X\to Y},\sigma^R_{Y\to X},\mathsf{closed}_X,\mathsf{bad}_X)$ with the absorbing visible-close bit $\mathsf{closed}_X$ and the private failure bit $\mathsf{bad}_X$; $X$ is live in epoch $t$ if $\mathsf{closed}_X=0$ when the epoch begins.
\end{definition}

$\Init(1^\lambda,\Gamma_A,\Gamma_B,\mathcal E_{\mathrm{cl}})$ samples the four keys $k_{A\to B},ak_{A\to B},k_{B\to A},ak_{B\to A}$ independently and uniformly, sets every counter and bit to zero, every buffer to $\epsilon$, $\mathsf{wst}=\mathsf{open}$, and $\mathsf{finrem}=\mathsf{finlim}=\bot$. No key or state variable is shared across directions or layers.

\subsection{Protected objects}
\label{sec:stream-objects}

A wrapper frame is $F=\tau\Vert x$ with $\tau \in \{\DATA,\DUMMY,\FIN\}$: DATA carries one application chunk, DUMMY carries uniformly random cover bytes, and FIN has an empty payload. Protection of a frame at sequence number $s$ is defined as
\begin{equation}
 \begin{aligned}
 d&=\Enc_{ak_{X\to Y}}(s,\tau\Vert x),\\
 \Call{Protect}{\tau,x,s}
   &=\mathsf{BE}_{8L_{\mathrm{in}}}(|d|)\Vert d,
 \end{aligned}
 \label{eq:wrapper}
\end{equation}
after which the caller increments $s^S$. The sequence-derived nonce and the direction-separated key bind each object's type, payload, position, and direction, so replayed, reordered, and cross-direction-reflected objects fail authentication. The length prefix is itself inner-layer plaintext: on the wire it sits inside the inner encrypted datastream, so object boundaries are invisible.

\begin{figure}[t]
\centering
\begin{tikzpicture}[font=\scriptsize,
  layer/.style={draw, rounded corners=1.5pt, inner sep=3pt,
    text width=0.70\columnwidth, align=left, fill=blue!4},
  side/.style={font=\tiny\itshape, text=black!65, align=center,
    text width=0.16\columnwidth, fill=white, draw=black!35,
    rounded corners=1pt, inner sep=2pt},
  arr/.style={-{Stealth[length=5pt]}, semithick}]
\node[layer] (app) {\textbf{Application}\quad at most one input per epoch:
  $\mu\in\{\bot,\mathsf{closeReq}\}\cup\mathcal M$};
\node[layer, below=8pt of app] (wrap)
  {\textbf{Wrapper}\quad key $ak_{X\to Y}$, sequence $s$:\quad
   chunk to ${\le}L_{\max}$;\; $F=\tau\Vert x$,
   $\tau\in\{\DATA,\DUMMY,\FIN\}$;\;
   $P=\mathsf{BE}_{32}(|d|)\,\Vert\,d$, $d=\Enc_{ak}(s,F)$};
\node[layer, below=8pt of wrap] (lower)
  {\textbf{Inner FEP}\quad key $k_{X\to Y}$, record $r$:\quad
   body $u$ of ${\le}B_{\mathrm{rec}}$ queued wrapper bytes;\;
   $H{=}\Enc_k(2r,\mathsf{BE}_{16}(|C|))$,\;
   $C{=}\Enc_k(2r{+}1,\mathsf{BE}_{16}(0)\Vert u)$};
\node[layer, below=8pt of lower] (rel)
  {\textbf{Scheduled release}\quad ciphertext buffer $\mathsf{obuf}$;\;
   release exactly $\Gamma_X(t)$ bytes in epoch $t$, FIN drains only when its
   record fully leaves};
\node[layer, below=8pt of rel] (tcp)
  {\textbf{Transport (TCP)}\quad reads, writes, and packets carry no record,
   epoch, or object boundary};
\draw[arr] (app) -- (wrap);
\draw[arr] (wrap) -- (lower);
\draw[arr] (lower) -- (rel);
\draw[arr] (rel) -- (tcp);
\draw[decorate,decoration={brace,amplitude=3pt},semithick]
  ($(wrap.north east)+(2pt,0)$) -- ($(rel.south east)+(2pt,0)$);
\node[side, anchor=west]
  at ($(wrap.east)!0.5!(rel.east)+(9pt,0)$)
  {DUMMY fills\\any deficit};
\end{tikzpicture}
\caption{Sender-side layering for direction $X\to Y$. The reverse direction uses independent keys. Wrapper structure is inner-layer plaintext and is not visible on the wire.}
\label{fig:arch}
\end{figure}

Application input is chunked canonically: $\Call{Chunk}{m}$ is the unique sequence $(m_1,\dots,m_j)$ with $m=m_1\Vert\cdots\Vert m_j$, $|m_i|=L_{\max}$ for $i<j$, and $0<|m_j|\le L_{\max}$; by convention $\Call{Chunk}{\epsilon}=(\epsilon)$, so an \emph{empty} DATA message is one authenticated object, distinct from supplying no input, which produces cover instead. To fill a ciphertext deficit $\delta$, the sender sizes one DUMMY as
\begin{align}
 \ell_d^{\max}&=\min\{L_{\max},\,B_{\mathrm{rec}}-c_{\mathrm{wrap}}\},\nonumber\\
 \ell_d&=\min\{\ell_d^{\max},
       \max\{0,\ \delta-c_{\mathrm{enc}}-c_{\mathrm{wrap}}\}\}.
 \label{eq:dummysize}
\end{align}
Overshooting a deficit is harmless: unused ciphertext stays in $\mathsf{obuf}$ for later epochs.

\subsection{Close state and the FIN-drain invariant}
\label{sec:stream-close}

Upon the first close request, the sender appends one protected FIN after queued DATA, enters $\mathsf{finQueued}$, and sets $\mathsf{finrem}$ to the number of inner-layer plaintext bytes through the FIN's final byte. Record generation decrements this counter by the bytes removed from $\mathsf{buf}$. For the record that consumes the final FIN byte, the sender sets
\begin{equation}
 \mathsf{finlim}=\mathsf{out}+|\mathsf{obuf}|+|d_{\mathrm{UD}}|,
 \label{eq:finlimit}
\end{equation}
where $d_{\mathrm{UD}}$ is that inner ciphertext record. Thus $\mathsf{finlim}$ is the absolute position of its final byte, and the sender enters $\mathsf{finDrained}$ only when $\mathsf{out}\ge\mathsf{finlim}$. A peer FIN sets only the private bit $\mathsf{rfin}$; it does not itself produce end-of-file, visible close, or a transport action.

Endpoint $X$ is \emph{ready} when its own FIN is drained, and its peer's FIN has been authenticated. The only close-related output is
\begin{equation}
 \mathsf{cl}_X(t)=\mathbf 1[
   \mathsf{wst}_{X\to Y}=\mathsf{finDrained}\ \land\
   \mathsf{rfin}_{Y\to X}\ \land\ t\in\mathcal E_{\mathrm{cl}}].
 \label{eq:streamclose}
\end{equation}
Figure~\ref{fig:close} shows the resulting behavior: both endpoints keep their full schedules through the closing epoch, and termination becomes visible only at the next public bucket after private readiness.

\begin{figure}[t]
\centering
\begin{tikzpicture}[font=\scriptsize, xscale=0.94]
\def\laneA{1.62} \def\laneB{0.62}
\draw[-{Stealth[length=5pt]}] (0.4,0) -- (8.8,0) node[below left=0pt and -14pt] {epoch $t$};
\foreach \e in {1,...,8} {\draw (\e,-0.05) -- (\e,0.05); \node[below] at (\e,-0.03) {\e};}
\foreach \e in {4,8} {
  \draw[dashed, black!55] (\e,-0.42) -- (\e,2.35);
  \node[black!55, font=\tiny, below] at (\e,-0.42) {$\in\mathcal E_{\mathrm{cl}}$};}
\node[left, font=\scriptsize] at (0.42,\laneA) {$A$};
\node[left, font=\scriptsize] at (0.42,\laneB) {$B$};
\fill[blue!22]  (0.55,\laneA-0.13) rectangle (4.5,\laneA+0.13);
\fill[black!12] (4.5,\laneA-0.13) rectangle (8.6,\laneA+0.13);
\fill[blue!22]  (0.55,\laneB-0.13) rectangle (4.5,\laneB+0.13);
\fill[black!12] (4.5,\laneB-0.13) rectangle (8.6,\laneB+0.13);
\node[font=\tiny, text=blue!50!black] at (2.5,\laneA-0.50)
  {scheduled bytes every epoch};
\node[font=\tiny, text=black!60] at (6.55,\laneA-0.50) {zero output};
\node[circle, fill=orange!80!black, inner sep=1.1pt,
      label={[font=\tiny]above:{\textsf{closeReq}}}] at (2,\laneA) {};
\node[circle, fill=orange!80!black, inner sep=1.1pt,
      label={[font=\tiny]below:{\textsf{closeReq}}}] at (2,\laneB) {};
\node[star, star points=5, fill=red!75!black, inner sep=1.2pt,
      label={[font=\tiny]above:{ready (private)}}] at (3.05,\laneA) {};
\node[star, star points=5, fill=red!75!black, inner sep=1.2pt,
      label={[font=\tiny]below:{ready (private)}}] at (3.05,\laneB) {};
\draw[-{Stealth[length=4pt]}, red!75!black] (4,\laneA+0.72) -- (4,\laneA+0.18);
\node[font=\tiny, text=red!75!black, above] at (4,\laneA+0.68) {visible close};
\draw[-{Stealth[length=4pt]}, red!75!black]
  (4,\laneB-0.48) -- (4,\laneB-0.18);
\node[font=\tiny, text=red!75!black, anchor=west]
  at (4.12,\laneB-0.43) {visible close};
\end{tikzpicture}
\caption{Close discipline on the baseline profile
($\mathcal E_{\mathrm{cl}}=\{4,8,\dots\}$). Both applications request close
in epoch~2; FINs drain and authenticate in epoch~3; the connection stays
byte-for-byte on schedule through epoch~4 and reveals close exactly there.}
\label{fig:close}
\end{figure}

\subsection{Scheduled datastream sender}
\label{sec:stream-sender}

In epoch $t$, Algorithm~\ref{alg:stream-send} queues the wrapper object for input $\mu\in\{\bot,\mathsf{closeReq}\}\cup\mathcal M$, fills $\mathsf{obuf}$ with inner records, and releases $p=\Gamma_X(t)$ bytes.

\begin{algorithm}[t]
\caption{Close-aware datastream send at endpoint $X$}
\label{alg:stream-send}
{\footnotesize
\begin{algorithmic}[1]
\Require epoch $t$, input $\mu$, live directional state toward $Y$
\State $p\gets\Gamma_X(t)$
\If{$\mu=m\in\mathcal M$}
  \If{$\mathsf{wst}\ne\mathsf{open}$}
    \State $\mathsf{bad}\gets1$; reject $m$
  \Else
    \ForAll{$m_i\in\Call{Chunk}{m}$}
      \State $\mathsf{buf}\gets\mathsf{buf}\Vert
        \Call{Protect}{\DATA,m_i,s^S}$; $s^S\gets s^S+1$
    \EndFor
  \EndIf
\ElsIf{$\mu=\mathsf{closeReq}$ and $\mathsf{wst}=\mathsf{open}$}
  \State $P\gets\Call{Protect}{\FIN,\epsilon,s^S}$; $s^S\gets s^S+1$
  \State $\mathsf{finrem}\gets|\mathsf{buf}|+|P|$;
         $\mathsf{buf}\gets\mathsf{buf}\Vert P$
  \State $\mathsf{wst}\gets\mathsf{finQueued}$
\EndIf
\While{$|\mathsf{obuf}|<p$}
  \If{$\mathsf{buf}=\epsilon$}
    \State $\delta\gets p-|\mathsf{obuf}|$; compute $\ell_d$ by
           Eq.~\eqref{eq:dummysize}
    \State $x\gets\{0,1\}^{\ell_d}$ uniformly
    \State $\mathsf{buf}\gets\Call{Protect}{\DUMMY,x,s^S}$;
           $s^S\gets s^S+1$
  \EndIf
  \State $u\gets$ longest prefix of $\mathsf{buf}$ with
         $|u|\le B_{\mathrm{rec}}$; remove $u$
  \State $d_{\mathrm{UD}}\gets\mathsf{UD.Enc}_{k}(r,u)$; $r\gets r+1$
  \If{$\mathsf{finrem}\ne\bot$}
    \If{$|u|\ge\mathsf{finrem}$}
      \State $\mathsf{finrem}\gets\bot$;
      $\mathsf{finlim}\gets\mathsf{out}+|\mathsf{obuf}|+|d_{\mathrm{UD}}|$
    \Else
      \State $\mathsf{finrem}\gets\mathsf{finrem}-|u|$
    \EndIf
  \EndIf
  \State $\mathsf{obuf}\gets\mathsf{obuf}\Vert d_{\mathrm{UD}}$
\EndWhile
\State $c\gets\mathsf{obuf}[1..p]$; remove this prefix; $\mathsf{out}\gets\mathsf{out}+p$
\If{$\mathsf{finlim}\ne\bot$ and $\mathsf{out}\ge\mathsf{finlim}$}
  \State $\mathsf{finlim}\gets\bot$; $\mathsf{wst}\gets\mathsf{finDrained}$
\EndIf
\State \Return $(c,\mathsf{cl}_X(t))$ and updated state
\end{algorithmic}
}
\end{algorithm}

The following propositions give the shaping and FIN-drain properties used in Theorem~\ref{thm:stream}.

\begin{proposition}[Traffic shaping]
\label{prop:shaping}
Suppose Eq.~\eqref{eq:dummyfit} holds. For every epoch $t$, live endpoint $X$, input $\mu$, and reachable endpoint state, Algorithm~\ref{alg:stream-send} completes after finitely many iterations and returns a ciphertext string $c$ satisfying
\(
    |c|=\Gamma_X(t)
\).
\end{proposition}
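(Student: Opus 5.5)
The plan is to separate the two claims. Termination is a potential-function argument about the while loop. The length claim then follows directly from the final slicing step of Algorithm~\ref{alg:stream-send}.

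\textbf{Length.} I take the length claim first. Upon exit from the while loop, the guard gives $|\mathsf{obuf}|\ge p$. The release line then sets $c=\mathsf{obuf}[1..p]$, which is a well-defined prefix of length exactly $p=\Gamma_X(t)$. The input-handling branch (DATA, rejection with $\mathsf{bad}\gets1$, FIN, or $\bot$) only appends finitely many bytes to $\mathsf{buf}$ and never removes bytes from $\mathsf{obuf}$. Hence $|c|=\Gamma_X(t)$ holds unconditionally once the loop exits, and everything reduces to termination.

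\textbf{Termination.} Each loop iteration performs a fixed sequence of steps, in order:
\begin{itemize}
\item possibly queue one DUMMY, which is a finite operation since $\ell_d\le\ell_d^{\max}$;
\item remove a prefix $u$ of $\mathsf{buf}$ with $|u|\le B_{\mathrm{rec}}$;
\item append $d_{\mathrm{UD}}=\mathsf{UD.Enc}_k(r,u)$ to $\mathsf{obuf}$.
\end{itemize}
The key observation is that every iteration strictly increases $|\mathsf{obuf}|$ by at least $c_{\mathrm{enc}}\ge1$ bytes. By Eq.~\eqref{eq:inner}, each record contributes header and body ciphertexts whose lengths include AEAD tags, so $|d_{\mathrm{UD}}|=|u|+c_{\mathrm{enc}}$. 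The loop therefore runs at most $\lceil p/c_{\mathrm{enc}}\rceil$ iterations regardless of the contents of $\mathsf{buf}$. No induction over reachable states is even needed beyond the fact that $\mathsf{buf}$ is a finite string. The FIN bookkeeping ($\mathsf{finrem}$, $\mathsf{finlim}$) performs only constant-time updates and does not affect progress.

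\textbf{Well-definedness, the main obstacle.} I must check that every step is well defined, and the subtle point is that $u$ must be nonempty and must fit in a record. If $\mathsf{buf}=\epsilon$ at loop entry, the sender pushes a DUMMY object of length $L_{\mathrm{in}}+c_{\mathrm{aead}}+L_{\mathrm{type}}+\ell_d$. By Eq.~\eqref{eq:dummysize}, $\ell_d\le B_{\mathrm{rec}}-c_{\mathrm{wrap}}$, so the object length is at most $B_{\mathrm{rec}}$, and Eq.~\eqref{eq:dummyfit} guarantees $\ell_d^{\max}\ge0$. The second part of Eq.~\eqref{eq:dummyfit} ensures the $\mathsf{BE}_{8L_{\mathrm{in}}}$ length prefix in Eq.~\eqref{eq:wrapper} is representable, so Protect is defined for the DUMMY and for any DATA chunk of size at most $L_{\max}$. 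The same argument covers FIN, whose object is just an empty-payload frame. Consequently $u$ is nonempty whenever $\mathsf{buf}\neq\epsilon$. The UD encoding is also defined: $|u|\le B_{\mathrm{rec}}$, and $|C_r|$ fits the 16-bit header under the parameter choice of the inner FEP.

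Finally, I would note that the bounded-counter assumption on $r,s^S$ for admissible sessions keeps nonces fresh. This does not affect lengths, so the proposition holds for every input $\mu$ and every reachable state.
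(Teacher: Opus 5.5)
Your proof is correct and follows the paper's argument: the DUMMY fits in one record by Eq.~\eqref{eq:dummyfit}, each iteration appends a record of length $|u|+c_{\mathrm{enc}}$, so $|\mathsf{obuf}|$ strictly increases until the guard fails, and the final prefix release returns exactly $\Gamma_X(t)$ bytes. Your explicit $\lceil p/c_{\mathrm{enc}}\rceil$ iteration bound is a harmless sharpening; one side remark slightly overreaches, since Eq.~\eqref{eq:dummyfit} alone makes DATA-object lengths representable only when $L_{\max}\le B_{\mathrm{rec}}-c_{\mathrm{wrap}}$, an implicit parameter assumption the paper also makes.
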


\begin{proof}
Consider one iteration of the loop. If $\mathsf{buf}=\epsilon$, the sender creates one DUMMY whose protected length is $c_{\mathrm{wrap}}+\ell_d\le B_{\mathrm{rec}}$ by the definition of $\ell_d^{\max}$ and~\eqref{eq:dummyfit}, so the object enters a single record. Thus, whether $\mathsf{buf}$ already contains pending protected objects or is populated with a newly generated DUMMY object, the iteration removes a nonempty prefix $u$ of $\mathsf{buf}$ and appends $d_{\mathrm{UD}}$ with $|d_{\mathrm{UD}}|=|u|+c_{\mathrm{enc}}\ge 1+c_{\mathrm{enc}}$ to $\mathsf{obuf}$. Hence $|\mathsf{obuf}|$ strictly increases; the guard $|\mathsf{obuf}|<p$ fails after finitely many iterations, and the explicit prefix release returns exactly $p=\Gamma_X(t)$ bytes.
\end{proof}

\begin{proposition}[FIN-drain soundness]
\label{prop:findrain}
If $\mathsf{wst}_{X\to Y}=\mathsf{finDrained}$, then every byte of the inner record containing the FIN object has been returned by some earlier or current $\Send_X$ invocation.
\end{proposition}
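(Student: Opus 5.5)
The proof will be an invariant argument over the execution of Algorithm~\ref{alg:stream-send}. The plan is to show that, once it is set, $\mathsf{finlim}$ equals the absolute wire position of the last byte of the inner record $d_{\mathrm{UD}}$ that carries the final FIN byte. Here wire positions are indices into the concatenation of all $\Send_X$ outputs. Once that is established, the transition to $\mathsf{finDrained}$ requires $\mathsf{out}\ge\mathsf{finlim}$, and $\mathsf{out}$ counts exactly the bytes already returned. Together these immediately give the claim.

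\textbf{Step 1: release-order invariant.} I will prove by induction over send calls and loop iterations that, at every point, the concatenation of all previously returned ciphertexts followed by $\mathsf{obuf}$ equals the concatenation of all inner records generated so far, in generation order, and that $\mathsf{out}$ is the length of the returned part. Both facts hold at $\Init$, where everything is empty and zero. Appending $d_{\mathrm{UD}}$ to $\mathsf{obuf}$ preserves the equality. Removing the prefix $\mathsf{obuf}[1..p]$ while adding $p$ to $\mathsf{out}$ also preserves it; this step uses Proposition~\ref{prop:shaping}, which guarantees $|\mathsf{obuf}|\ge p$ at release, so the removed prefix has length exactly $p$.

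\textbf{Step 2: $\mathsf{finlim}$ locates the FIN record.} The first task is to confirm that $\mathsf{finrem}$ tracks the number of bytes of $\mathsf{buf}$ up to and including the FIN's last byte. At queueing time it is set to $|\mathsf{buf}|+|P|$ with $P$ appended last. Each record removes a prefix $u$ and decrements $\mathsf{finrem}$ by $|u|$. Only DUMMYs can be appended afterwards, because DATA is rejected once $\mathsf{wst}\ne\mathsf{open}$ and a repeated $\mathsf{closeReq}$ is ignored; such appends lie after the FIN and do not affect the count. It follows that the record satisfying $|u|\ge\mathsf{finrem}$ is exactly the one containing the FIN's final byte. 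At that moment $\mathsf{out}+|\mathsf{obuf}|$ is, by Step~1, the total length of all previously generated records, so $\mathsf{finlim}=\mathsf{out}+|\mathsf{obuf}|+|d_{\mathrm{UD}}|$ is the absolute position of the last byte of this record. Since $\mathsf{finrem}$ becomes $\bot$ and is never reset, $\mathsf{finlim}$ is assigned at most once, and it stays unchanged until it is consumed.

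\textbf{Step 3: conclusion.} The only assignment $\mathsf{wst}\gets\mathsf{finDrained}$ happens immediately after a release, under the guard $\mathsf{out}\ge\mathsf{finlim}$, and nothing later sets $\mathsf{wst}$ back. By Step~1, every position up to $\mathsf{out}$ has been returned by this call or an earlier one. By Step~2, this covers every byte of the FIN record, including its first bytes, since those precede its last byte in the stream order.

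The main obstacle I expect is in Step~2: the case where the FIN record is generated during the same epoch in which $\mathsf{closeReq}$ arrives, or where it spans several epochs. I need $\mathsf{out}$ to be read before the current epoch's increment by $p$ but after all earlier releases. This holds because the increment happens only after the loop ends, and the invariant of Step~1 does not care at which point the FIN record is generated.
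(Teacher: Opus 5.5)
Your proposal is correct and takes the same route as the paper. $\mathsf{finlim}$ is recorded as the absolute end position of the record carrying the final FIN byte, records leave $\mathsf{obuf}$ in FIFO order, and the switch to $\mathsf{finDrained}$ is guarded by $\mathsf{out}\ge\mathsf{finlim}$; your Steps~1--2 just make explicit the in-order-release invariant and the $\mathsf{finrem}$ bookkeeping that the paper's two-line proof takes for granted.
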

\begin{proof}
When the inner record containing the final FIN byte is generated, its final ciphertext position is recorded as $\mathsf{finlim}=\mathsf{out}+|\mathsf{obuf}|+|d_{\mathrm{UD}}|$. Since records are released in order, the transition to $\mathsf{finDrained}$ occurs only when $\mathsf{out}\ge\mathsf{finlim}$, after the complete record has been released.
\end{proof}

Because $\mathsf{obuf}$ releases arbitrary ciphertext prefixes, epoch boundaries need not align with inner-record boundaries. Correctness likewise does not depend on the boundaries of transport reads, writes, or TCP segments.

\subsection{Datastream receiver and failure discipline}
\label{sec:stream-receiver}

Algorithm~\ref{alg:stream-recv} appends inner plaintext to $\mathsf{ibuf}$ and processes complete length-delimited objects in order, independent of input fragmentation; $\phi_{\mathrm{UD}}$ is the inner receiver's failure bit.

Let
\(
P=\mathsf{BE}_{8L_{\mathrm{in}}}(|d|)\Vert d
\) be the complete object at the head of $\mathsf{ibuf}$. At the expected sequence number $s^R$, the receiver accepts $P$ only if
\(
\Dec_{ak_{Y\to X}}(s^R,d)=\tau\Vert x
\) and
\(
\tau=\DUMMY\ \lor\
\bigl[\neg\mathsf{rfin}\land
 (\tau=\DATA\lor(\tau=\FIN\land x=\epsilon))\bigr]
\). Acceptance increments $s^R$ and applies the semantics of $\tau$. Failure removes $P$, sets $\mathsf{bad}$, and ends the call without incrementing $s^R$; the receiver never searches later byte offsets for another boundary.

\begin{algorithm}[t]
\caption{Close-aware datastream receive at endpoint $X$}
\label{alg:stream-recv}
{\footnotesize
\begin{algorithmic}[1]
\Require epoch $t$, incoming datastream fragment $c$, state from $Y$
\State $(\mathsf{st}_{\mathrm{UD}},z,\phi_{\mathrm{UD}})\gets
       \mathsf{UD.Recv}(\mathsf{st}_{\mathrm{UD}},c)$
\State $\mathsf{bad}\gets\mathsf{bad}\lor\phi_{\mathrm{UD}}$;
       $\mathsf{ibuf}\gets\mathsf{ibuf}\Vert z$; $\mathcal L\gets[\,]$
\While{$|\mathsf{ibuf}|\ge L_{\mathrm{in}}$}
  \State $q\gets\Call{DecodeBE}{\mathsf{ibuf}[1..L_{\mathrm{in}}]}$
  \State \textbf{if} $|\mathsf{ibuf}|<L_{\mathrm{in}}+q$ \textbf{then} \textbf{break} \Comment{retain incomplete object}
  \State remove $P=\mathsf{BE}_{8L_{\mathrm{in}}}(q)\Vert d$
         from the front of $\mathsf{ibuf}$
  \State $F\gets\Dec_{ak_{Y\to X}}(s^R,d)$
  \State \textbf{if} $F=\bot$ or $F$ is not a valid $\tau\Vert x$ \textbf{then} $\mathsf{bad}\gets1$; \textbf{break}
  \If{$\tau=\DATA$ and $\neg\mathsf{rfin}$}
    \State append $x$ to $\mathcal L$
  \ElsIf{$\tau=\DUMMY$}
    \State discard $x$
  \ElsIf{$\tau=\FIN$ and $x=\epsilon$ and $\neg\mathsf{rfin}$}
    \State $\mathsf{rfin}\gets1$
  \Else
    \State $\mathsf{bad}\gets1$; \textbf{break}
  \EndIf
  \State $s^R\gets s^R+1$
\EndWhile
\State \Return $(\mathcal L,\mathsf{cl}_X(t))$ and updated state
\end{algorithmic}
}
\end{algorithm}

Inner-receiver failures are permanently fail-stop because record alignment cannot be recovered; wrapper failures end only the current call. In both cases, $\mathsf{bad}$ remains private and affects neither output length nor visible close. Recovery requires a new higher-layer session.

\subsection{Epoch operation and absorbing close}
\label{sec:stream-epoch}

Algorithm~\ref{alg:epoch} runs both sends first. Each receive then uses its endpoint's post-send state; only the visible-close update is deferred. It combines $\mathsf{cl}_X\gets\mathsf{cl}_X^S\lor\mathsf{cl}_X^R$ and applies $\mathsf{closed}_X\gets\mathsf{closed}_X\lor\mathsf{cl}_X$ absorbingly. A closed endpoint subsequently ignores inputs and returns $(\epsilon,1)$. This ordering preserves all $\Gamma_X(t)$ bytes in the closing epoch and emits none afterward.

\begin{algorithm}[t]
\caption{Bidirectional epoch coordinator}
\label{alg:epoch}
{\footnotesize
\begin{algorithmic}[1]
\State $(c_{A\to B},\mathsf{cl}_A^S)\gets\Send_A(t,\mu_A)$
\State $(c_{B\to A},\mathsf{cl}_B^S)\gets\Send_B(t,\mu_B)$
\State adversary/transport produces deliveries $\hat c_{A\to B},
       \hat c_{B\to A}$
\State $(\mathcal L_B,\mathsf{cl}_B^R)\gets\Recv_B(t,\hat c_{A\to B})$
\State $(\mathcal L_A,\mathsf{cl}_A^R)\gets\Recv_A(t,\hat c_{B\to A})$
\State $\mathsf{cl}_A\gets\mathsf{cl}_A^S\lor\mathsf{cl}_A^R$;
       $\mathsf{cl}_B\gets\mathsf{cl}_B^S\lor\mathsf{cl}_B^R$
\State apply $\mathsf{cl}_A,\mathsf{cl}_B$ as absorbing state updates only now
\State \Return both scheduled outputs, deliveries, and close bits
\end{algorithmic}
}
\end{algorithm}

Cross-direction isolation is structural: by
Definition~\ref{def:streamstate}, $\sigma^S_{X\to Y}$ and $\sigma^R_{Y\to X}$ share no keys, counters, or buffers, and the only value computed from both is the close conjunction of Eq.~\eqref{eq:streamclose}, which is exactly the declared leakage.

\section{Bidirectional Datagram Construction}
\label{sec:datagram}

The companion construction $\Pi_{\mathrm{DG}}=(\Init,\Send_A,\Send_B,\Recv_A,\Recv_B)$ uses the same epoch interface and coordinator (Algorithm~\ref{alg:epoch}). Datagrams are atomic, may be reordered, duplicated, or lost, and provide no stream position from which to infer FIN delivery. 
As a result of this atomicity, the $\Send$ algorithm may return a message that is undeliverable, due to length or close state, to the application layer, an operation which we refer to as ``backpressure.''
The construction uses the nonce-based AEAD of \S\ref{sec:stream} with transmitted uniform nonces.
For fixed public parameters $L\in\mathbb N$ and $N_{\mathrm{sess}}\le2^{64}$, $\Init(1^\lambda,\Gamma_A,\Gamma_B,\mathcal E_{\mathrm{cl}})$ samples independent direction keys $k_{A\to B}$ and $k_{B\to A}$ and initializes each endpoint's semantic state: sequence $s^S=0$, replay sets $R_n=R_s=\emptyset$, readiness epoch $\rho=\bot$, and all bits of Eq.~\eqref{eq:dgstate} zero. Here $L$ is the linger depth of Eq.~\eqref{eq:dglinger}; sender algorithms abbreviate $k_{X\to Y}$ as $k$. We first define an atomic shaped channel, then the bidirectional semantics implemented by the wrapper.

\subsection{Atomic shaped channel}
\label{sec:dg-atomic}

Let $p\in[0,65507]$ be the requested UDP payload length (the IPv4 format bound), let $\ell_{\mathrm{nonce}}$ and $\ell_{\mathrm{tag}}$ be the AEAD nonce and tag lengths in bytes, and let $h=\ell_{\mathrm{nonce}}+\ell_{\mathrm{tag}}$. The channel has two sender maps, each returning exactly $p$ bytes, and one decoder. In both maps, $n\leftarrow\{0,1\}^{8\ell_{\mathrm{nonce}}}$ is a fresh uniform nonce, transmitted in the clear. For null input $\top$, $\mathsf{AtomicSendChaff}_k(p)$ outputs $p$ uniform bytes when $p<h+1$, and otherwise $n\Vert\Enc_k(n,0\Vert0^{p-h-1})$. For a message $m$ with $|m|\le2^{16}-1$ and $p\ge h+3+|m|$,
$\mathsf{AtomicSendData}_k(m,p)$ outputs $n\Vert\Enc_k(n,z)$ with plaintext
\begin{equation}
 z=1\Vert\mathsf{BE}_{16}(|m|)\Vert
 0^{p-h-3-|m|}\Vert m.
 \label{eq:dgbase}
\end{equation}
With AES-256-GCM, $h=28$: lengths 0--28 are unauthenticated uniform chaff, 29 bytes is the smallest authenticated null, and base DATA needs $31+|m|$ bytes. Although $65{,}507$ bytes is the maximum UDP payload size, the schedule generator should produce values $\Gamma_X(t)$ that avoid IP fragmentation along the intended network path~\cite{eggert2017rfc}.

Algorithm~\ref{alg:atomic-open} is the complete decoder $\mathsf{AtomicOpen}_k$. Padding precedes the message, so the authenticated 16-bit length selects the final $\ell$ bytes without putting $\ell$ on the wire.

\begin{algorithm}[t]
\caption{Atomic datagram receive $\mathsf{AtomicOpen}_k(c)\to(v,n)$}
\label{alg:atomic-open}
{\footnotesize
\begin{algorithmic}[1]
\State \textbf{if} $|c|<h+1$ \textbf{then} \Return $(\top,\bot)$ \Comment{short chaff}
\State \textbf{if} $|c|>65507$ \textbf{then} \Return $(\mathsf{invalid},\bot)$
\State parse $c=n\Vert d$; $z\gets\Dec_k(n,d)$
\State \textbf{if} $z=\bot$ or $|z|=0$ \textbf{then} \Return $(\mathsf{invalid},n)$
\State \textbf{if} $z[1]=0$ \textbf{then} \Return $(\top,n)$
\State \textbf{if} $z[1]\ne1$ or $|z|<3$ \textbf{then} \Return $(\mathsf{invalid},n)$
\State $\ell\gets\Call{DecodeBE}{z[2..3]}$
\State \textbf{if} $\ell>|z|-3$ \textbf{then} \Return $(\mathsf{invalid},n)$
\State \Return (the final $\ell$ bytes of $z$, $n$)
\end{algorithmic}
}
\end{algorithm}

Short and authenticated-null datagrams have no semantic output, so replaying them is harmless and consumes no anti-replay state. Authentication failure is local to one datagram: unlike a datastream, it creates no alignment problem for later packets.

\subsection{Semantic frame and endpoint state}

The non-null message given to the atomic channel is the semantic frame
\begin{equation}
 M=f\Vert\mathsf{BE}_{64}(s)\Vert x,
 \label{eq:dgframe}
\end{equation}
where the flag byte $f$ has fixed $\DATA$, $\FIN$, and $\ACK$ bit positions, $s$ is the sender's semantic sequence number, and $x$ is the application payload (possibly $\epsilon$). Algorithm~\ref{alg:dg-recv} enforces the valid flag combinations. Flags, sequence, and payload are all encrypted.
Endpoint $X$'s semantic state is its outgoing sequence number $s^S$, exact sets $R_n,R_s$ of accepted semantic nonces and sequences, the readiness epoch $\rho\in\mathbb N\cup\{\bot\}$ recorded when Eq.~\eqref{eq:dgclose} first holds, and monotone bits
\begin{equation}
 (\mathsf{ownFin},\mathsf{peerFin},\mathsf{ownFinAck},
   \mathsf{peerAckSent},\mathsf{closed},\mathsf{bad}).
 \label{eq:dgstate}
\end{equation}
Endpoint $X$ is \emph{ready}, written $\mathsf{ready}_X$, exactly when
\begin{equation}
 \mathsf{ownFin}_X\land\mathsf{peerFin}_X\land
 \mathsf{ownFinAck}_X\land\mathsf{peerAckSent}_X.
 \label{eq:dgclose}
\end{equation}
Valid semantic sequence numbers lie in $\{0,\ldots,N_{\mathrm{sess}}-1\}$. Once $s^S=N_{\mathrm{sess}}$, inputs requiring a new semantic frame receive backpressure and the sender emits scheduled chaff; previously established close bits may still advance through received traffic. An authenticated frame with $s\ge N_{\mathrm{sess}}$ is invalid and yields no semantic output. Thus exhaustion prevents further DATA or control transmission but does not erase readiness already obtained from existing evidence.

The exact sets $R_n$ and $R_s$ admit a frame only when both its nonce and sequence number are fresh, irrespective of arrival order.

The semantic wrapper adds nine bytes (one flag byte and eight sequence bytes), so the smallest semantic datagram is $28+3+9=40$ bytes and the largest application payload at public length $p$ is $p-40$. Input that does not fit receives backpressure without changing endpoint state, and $\Send_X$ emits exactly $p$ bytes of chaff.

\subsection{Datagram sender}

Algorithm~\ref{alg:dg-send} accepts at most one application input per epoch. After local half-close it repeats FIN, and after peer FIN it repeats ACK, within the existing schedule and through the linger phase of Eq.~\eqref{eq:dglinger}.

\begin{algorithm}[t]
\caption{Scheduled datagram send at endpoint $X$}
\label{alg:dg-send}
{\footnotesize
\begin{algorithmic}[1]
\Require epoch $t$, input $\mu$, live endpoint state
\State $p\gets\Gamma_X(t)$; $x\gets\epsilon$; $\mathsf{haveData}\gets0$
\If{$\mu=m\in\mathcal M$}
  \If{$\mathsf{ownFin}$}
    \State $\mathsf{bad}\gets1$; reject $m$
  \ElsIf{$p<40$ or $|m|>p-40$ or $s^S\ge N_{\mathrm{sess}}$}
    \State backpressure $m$
  \Else
    \State $x\gets m$; $\mathsf{haveData}\gets1$ \Comment{atomic admission}
  \EndIf
\ElsIf{$\mu=\mathsf{closeReq}$ and $\neg\mathsf{ownFin}$}
  \State \textbf{if} $p<40$ or $s^S\ge N_{\mathrm{sess}}$ \textbf{then} backpressure close \textbf{else} $\mathsf{ownFin}\gets1$
\EndIf
\State $\mathsf{need}\gets(\mathsf{haveData}\lor\mathsf{ownFin}
       \lor\mathsf{peerFin})\land s^S<N_{\mathrm{sess}}$
\If{$\mathsf{need}$ and $p\ge40$}
  \State $f\gets0$
  \State \textbf{if} $\mathsf{haveData}$ \textbf{then} $f\gets f\lor\DATA$
  \State \textbf{if} $\mathsf{ownFin}$ \textbf{then} $f\gets f\lor\FIN$
  \State \textbf{if} $\mathsf{peerFin}$ \textbf{then} $f\gets f\lor\ACK$
  \State $M\gets f\Vert\mathsf{BE}_{64}(s^S)\Vert x$
  \State $c\gets\mathsf{AtomicSendData}_k(M,p)$; $s^S\gets s^S+1$
  \State \textbf{if} $f$ contains ACK \textbf{then} $\mathsf{peerAckSent}\gets1$
\Else
  \State $c\gets\mathsf{AtomicSendChaff}_k(p)$
\EndIf
\State \textbf{if} $\rho=\bot$ and $\mathsf{ready}_X$ \textbf{then} $\rho\gets t$
\State \Return $(c,\mathsf{cl}_X(t))$ and updated state
\end{algorithmic}
}
\end{algorithm}

DATA and FIN are mutually exclusive because DATA is rejected after $\mathsf{ownFin}$ is set. DATA+ACK permits continued sending before local half-close; FIN+ACK carries both close signals. With no applicable flag, the sender emits chaff, so it never generates a zero flag byte.

\subsection{Datagram receiver, replay, and close}

After atomic authentication of a non-null message, Algorithm~\ref{alg:dg-recv} checks both the transmitted nonce and hidden sequence, rejecting repeated ciphertexts and duplicate semantic positions while allowing reordering.

\begin{algorithm}[t]
\caption{Datagram receive at endpoint $X$}
\label{alg:dg-recv}
{\footnotesize
\begin{algorithmic}[1]
\Require epoch $t$, received atomic datagram $c$ from $Y$
\State $(v,n)\gets\mathsf{AtomicOpen}_{k_{Y\to X}}(c)$
\State \textbf{if} $v=\top$ \textbf{then} \Return $(\bot,\mathsf{cl}_X(t))$
\State \textbf{if} $v=\mathsf{invalid}$ \textbf{then} $\mathsf{bad}\gets1$; \Return $(\bot,\mathsf{cl}_X(t))$
\State \textbf{if} $n\in R_n$ \textbf{then} \Return replay with no output
\State insert $n$ into $R_n$
\State \textbf{if} $|v|<9$ \textbf{then} $\mathsf{bad}\gets1$; \Return no output
\State parse $v=f\Vert\mathsf{BE}_{64}(s)\Vert x$
\State \textbf{if} $s\ge N_{\mathrm{sess}}$ \textbf{then} $\mathsf{bad}\gets1$; \Return no output
\State \textbf{if} $s\in R_s$ \textbf{then} \Return replay with no output
\State insert $s$ into $R_s$
\If{$f=0$ or $f$ has reserved bits or DATA+FIN}
  \State $\mathsf{bad}\gets1$; \Return no output
\EndIf
\If{$f$ has ACK and $\neg\mathsf{ownFin}$}
  \State $\mathsf{bad}\gets1$; \Return no output
\EndIf
\If{$f$ has DATA and $\mathsf{peerFin}$}
  \State $\mathsf{bad}\gets1$; \Return no output
\EndIf
\State \textbf{if} $f$ has FIN \textbf{then} $\mathsf{peerFin}\gets1$
\State \textbf{if} $f$ has ACK \textbf{then} $\mathsf{ownFinAck}\gets1$
\State \textbf{if} $f$ has DATA \textbf{then} deliver $x$ atomically \textbf{else} deliver nothing
\State \textbf{if} $\rho=\bot$ and $\mathsf{ready}_X$ \textbf{then} $\rho\gets t$
\State \Return delivery and $\mathsf{cl}_X(t)$
\end{algorithmic}
}
\end{algorithm}

FIN and ACK bits are monotone. ACK is valid only after local half-close, and DATA first arriving after authenticated peer FIN is rejected, including earlier DATA reordered across FIN. Such reordering is an availability loss, not acceptance of incorrect semantics. Authentication or semantic-validation failure discards only that datagram and sets private $\mathsf{bad}$; later datagrams remain independently processable.

$\mathsf{ownFinAck}$ records an authenticated acknowledgment of $X$'s FIN, while $\mathsf{peerAckSent}$ records that $X$ released an ACK for the peer's FIN. The send and receive algorithms record $\rho_X$ when Eq.~\eqref{eq:dgclose} first holds. The close output is
\begin{equation}
 \mathsf{cl}_X(t)=\mathbf 1\bigl[\mathsf{ready}_X\ \land\
   t\in\mathcal E_{\mathrm{cl}}\ \land\
   |\mathcal E_{\mathrm{cl}}\cap[\rho_X,t)|\ge L\bigr],
 \label{eq:dglinger}
\end{equation}
so visible close occurs at the $(L{+}1)$-st permitted bucket at or after readiness; $L=0$ closes at the first. While $s^S<N_{\mathrm{sess}}$, linger epochs with $\Gamma_X(t)\ge40$ emit fresh FIN{+}ACK frames; other epochs emit chaff. Algorithm~\ref{alg:epoch} applies the absorbing close only after same-epoch traffic is emitted and processed. Peer liveness requires acceptance of an eligible linger frame; loss, corruption, short schedules, or sequence exhaustion can prevent it.
Figure~\ref{fig:linger} contrasts the two disciplines on one adversarial execution: without linger the acknowledged endpoint's close makes the peer's missing ACK permanently undeliverable, while a one-bucket linger retransmits it past the drop window. Section~\ref{sec:eval-loss} evaluates the loss boundary.

\begin{figure}[t]
\centering
\begin{tikzpicture}[font=\scriptsize, yscale=0.62]
\tikzset{msg/.style={-{Stealth[length=4pt]}, semithick},
         drop/.style={semithick, red!70!black},
         ep/.style={left, font=\tiny, text=black!60},
         ready/.style={star, star points=5, fill=red!75!black,
                       inner sep=1.2pt},
         closelab/.style={font=\tiny, text=red!70!black},
         lingerspan/.style={line width=2.6pt, orange!45}}
\def\Ax{1.15} \def\Bx{5.45}
\node[font=\scriptsize\bfseries, anchor=west] at (0.05,1.55)
  {(a) immediate bucket close ($L{=}0$): $e^\star=\bot\,/\,4$};
\draw[semithick] (\Ax,0.55) node[above] {$A$} -- (\Ax,-4.60);
\draw[semithick] (\Bx,0.55) node[above] {$B$} -- (\Bx,-4.60);
\node[ep] at (\Ax-0.30,-0.55) {$t{=}2$};
\draw[msg] (\Ax,-0.30) -- node[above=1pt, pos=0.30, font=\tiny] {$\FIN$} (\Bx,-0.80);
\draw[msg] (\Bx,-0.30) -- node[above=1pt, pos=0.30, font=\tiny] {$\FIN$} (\Ax,-0.80);
\node[ep] at (\Ax-0.30,-1.65) {$t{=}3$};
\draw[msg] (\Ax,-1.40) -- node[above=1pt, pos=0.35, font=\tiny] {$\FIN{+}\ACK$} (\Bx,-1.90);
\node[ready, label={[font=\tiny]above right:{ready}}] at (\Bx,-1.90) {};
\draw[drop] (\Bx,-1.40) -- node[above=1pt, pos=0.55, font=\tiny]
  {$\FIN{+}\ACK$} (3.55,-1.75) node[cross out, draw, inner sep=1.6pt] {};
\node[ep] at (\Ax-0.30,-2.75) {$t{=}4$};
\draw[drop] (\Bx,-2.50) -- (3.55,-2.85) node[cross out, draw, inner sep=1.6pt] {};
\node[closelab, align=right, left=3pt]
  at (\Bx,-3.40) {$4\in\mathcal E_{\mathrm{cl}}$: \textbf{$B$ closes}, $e_B^\star{=}4$};
\draw[line width=1pt, red!70!black] (\Bx,-3.10) -- (\Bx,-4.60);
\draw[red!70!black] (\Bx-0.14,-3.10) -- (\Bx+0.14,-3.10);
\node[ep] at (\Ax-0.30,-4.05) {$t{\ge}5$};
\draw[msg, blue!60!black] (\Ax,-3.85) --
  node[above=0pt, pos=0.38, font=\tiny, text=blue!60!black]
  {cover, every epoch} (\Bx-0.28,-4.35);
\node[font=\tiny, align=left, anchor=north west, text=blue!60!black,
      text width=7.0cm] at (\Ax-0.55,-4.85)
  {$B$ is silent, so the missing ACK can never arrive: $A$ is never ready
   and keeps its schedule, $e_A^\star{=}\bot$.};
\begin{scope}[yshift=-7.55cm]
\node[font=\scriptsize\bfseries, anchor=west] at (0.05,1.55)
  {(b) one-bucket linger ($L{=}1$): $e^\star=12\,/\,8$};
\draw[semithick] (\Ax,0.55) node[above] {$A$} -- (\Ax,-6.55);
\draw[semithick] (\Bx,0.55) node[above] {$B$} -- (\Bx,-6.55);
\draw[lingerspan] (\Bx,-1.90) -- (\Bx,-4.95);
\draw[lingerspan] (\Ax,-4.10) -- (\Ax,-6.05);
\node[ep] at (\Ax-0.30,-0.55) {$t{=}2$};
\draw[msg] (\Ax,-0.30) -- node[above=1pt, pos=0.30, font=\tiny] {$\FIN$} (\Bx,-0.80);
\draw[msg] (\Bx,-0.30) -- node[above=1pt, pos=0.30, font=\tiny] {$\FIN$} (\Ax,-0.80);
\node[ep] at (\Ax-0.30,-1.65) {$t{=}3$};
\draw[msg] (\Ax,-1.40) -- node[above=1pt, pos=0.35, font=\tiny] {$\FIN{+}\ACK$} (\Bx,-1.90);
\node[ready, label={[font=\tiny]above right:{ready, $\rho_B{=}3$}}]
  at (\Bx,-1.90) {};
\draw[drop] (\Bx,-1.40) -- node[above=1pt, pos=0.55, font=\tiny]
  {$\FIN{+}\ACK$} (3.55,-1.75) node[cross out, draw, inner sep=1.6pt] {};
\node[ep] at (\Ax-0.30,-2.75) {$t{=}4$};
\draw[drop] (\Bx,-2.50) -- (3.55,-2.85) node[cross out, draw, inner sep=1.6pt] {};
\node[font=\tiny, align=right, text=orange!75!black, left=3pt] at (\Bx,-3.05)
  {$4\in\mathcal E_{\mathrm{cl}}$: $B$ \emph{lingers}, keeps $\FIN{+}\ACK$};
\node[ep] at (\Ax-0.30,-3.85) {$t{=}5$};
\draw[msg] (\Bx,-3.60) -- node[below=1pt, pos=0.35, font=\tiny] {$\FIN{+}\ACK$} (\Ax,-4.10);
\node[ready, label={[font=\tiny]below right:{ready, $\rho_A{=}5$}}]
  at (\Ax,-4.10) {};
\node[ep] at (\Ax-0.30,-4.95) {$t{=}8$};
\node[closelab, align=right, left=3pt]
  at (\Bx,-5.20) {one bucket since $\rho_B$: \textbf{$B$ closes}, $e_B^\star{=}8$};
\draw[line width=1pt, red!70!black] (\Bx,-4.95) -- (\Bx,-6.55);
\draw[red!70!black] (\Bx-0.14,-4.95) -- (\Bx+0.14,-4.95);
\node[ep] at (\Ax-0.30,-6.05) {$t{=}12$};
\node[closelab, align=left, right=3pt]
  at (\Ax,-5.75) {one bucket since $\rho_A$: \textbf{$A$ closes}, $e_A^\star{=}12$};
\draw[line width=1pt, red!70!black] (\Ax,-6.05) -- (\Ax,-6.55);
\draw[red!70!black] (\Ax-0.14,-6.05) -- (\Ax+0.14,-6.05);
\end{scope}
\end{tikzpicture}
\caption{One-sided ACK-phase loss on the baseline profile
($\mathcal E_{\mathrm{cl}}=\{4,8,\dots\}$, close requests in epoch~2,
$B$-to-$A$ ACK-bearing epochs 3--4 dropped). (a)~Under immediate bucket
close, $B$ closes at epoch~4 and falls silent, so $A$ never authenticates an
ACK and stays live indefinitely. (b)~With a one-bucket linger, $B$ stays on
schedule through epoch~8, its epoch-5 FIN{+}ACK completes $A$'s evidence,
and both endpoints close (Table~\ref{tab:loss}, $k{=}2$ column).}
\label{fig:linger}
\end{figure}

Three invariants follow from the algorithms. \emph{Traffic shaping}: every live epoch emits one datagram of $\Gamma_X(t)$ bytes. \emph{At-most-once semantics}: $R_n$ and $R_s$ admit each semantic datagram's DATA and control meaning at most once despite reordering or duplication. \emph{Close safety}: Eq.~\eqref{eq:dgclose} requires the four FIN/ACK bits except with the atomic channel's forgery probability (Lemma~\ref{lem:dgclose}, Appendix~\ref{app:construction}), and Eq.~\eqref{eq:dglinger} can only defer visible close.

\begin{table*}[t]
\caption{Mechanisms of the unidirectional FEPs of~\cite{fenske2024bytes} and
of this work. Columns are paired so that each construction sits beside the
unidirectional channel it builds on; ``---'' means the notion does not arise.}
\label{tab:models}
\centering
\scriptsize
\setlength{\tabcolsep}{3.5pt}
\begin{tabular}{p{0.125\textwidth}p{0.185\textwidth}p{0.205\textwidth}p{0.175\textwidth}p{0.205\textwidth}}
\toprule
 & \multicolumn{2}{c}{Datastream} & \multicolumn{2}{c}{Datagram} \\
\cmidrule(lr){2-3}\cmidrule(lr){4-5}
 & Unidirectional~\cite{fenske2024bytes}
 & This work (\S\ref{sec:stream})
 & Unidirectional~\cite{fenske2024bytes}
 & This work (\S\ref{sec:datagram}) \\
\midrule
Directions & one & two& one & two\\
Unit & byte prefix & byte prefix & atomic datagram & atomic datagram \\
Delivery & reliable, ordered & reliable, ordered & loss/reorder/duplicate & loss/reorder/duplicate \\
Nonce & implicit counter & implicit counter & transmitted fresh & transmitted fresh \\
Cover & inner-layer padding & authenticated DUMMY object & null message $\top$ & null $\top$ shaped to $\Gamma_X(t)$ \\
\midrule
Frame typing & none (lengths only) & authenticated $\DATA/\DUMMY/\FIN$ & none & authenticated $\DATA/\FIN/\ACK$ flags \\
Semantic sequence & --- & wrapper counter $s$ & --- & hidden 64-bit counter \\
Replay defense & datastream position & datastream position & not required by FEP-CCA & exact nonce and sequence sets \\
Half-close & --- & private; cover continues & --- & private; cover continues \\
Close evidence & sender-side $C_{\mathrm{UD}}$ & FIN record released $\wedge$ peer FIN & none & FIN acknowledged $\wedge$ ACK sent \\
Visible close & sender-chosen & public bucket $\mathcal E_{\mathrm{cl}}$ & --- & public bucket $\mathcal E_{\mathrm{cl}}$, $L$-bucket linger \\
Failure & fail-stop & fail-stop (inner); call-local (wrapper) & discard one packet & discard one packet \\
Cross-direction isolation & --- & structural (four keys) & --- & structural (two keys) \\
Minimum cover & one byte & one requested byte & zero-byte payload & zero-byte payload \\
Minimum semantics & amortized & amortized & --- & 40-byte payload \\
\bottomrule
\end{tabular}
\end{table*}

\section{Security Analysis}
\label{sec:security}

We analyze the constructions under Definitions~\ref{def:passive}
and~\ref{def:active} of \S\ref{sec:defs}, with leakage $L_{\mathrm{BD}}$ from
Eq.~\eqref{eq:leakage}. The passive game replaces each live output with an
equal-length uniform string; the active game adds adversarial delivery to both
receivers, with prefix synchronization for streams and atomic matching for
datagrams. The monotone wrapper-integrity game covers forged DATA or FIN,
post-FIN DATA, and early close; Appendix~\ref{app:games} gives its concrete
bound.

\begin{theorem}[Datastream security]
\label{thm:stream}
Assume the inner unidirectional datastream FEP is correct and actively secure with
public output lengths and zero inner close, the wrapper AEAD has pseudorandom
ciphertexts (IND\$-CPA) and INT-CTXT security, the four direction/layer keys
are independent, wrapper and inner sequence numbers never repeat within a
session, and failure state is private. Then the construction of
\S\ref{sec:stream} provides bidirectional traffic shaping, correctness,
authenticated wrapper-state integrity, and active BiFEP
security with leakage $L_{\mathrm{BD}}$.
\end{theorem}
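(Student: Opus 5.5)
We prove the four properties separately and combine them in one hybrid argument for active security. Traffic shaping is Proposition~\ref{prop:shaping}, applied per epoch and per live endpoint, together with the coordinator rule of Algorithm~\ref{alg:epoch}: a live endpoint emits exactly $\Gamma_X(t)$ bytes through its close epoch, and a closed endpoint returns $(\epsilon,1)$ afterward. This matches the lengths prescribed in Definition~\ref{def:passive}.

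\textbf{Correctness.} Consider an honest, unmodified delivery. Inner-FEP correctness returns the exact wrapper byte stream $\mathsf{buf}$ in order, independent of fragmentation. By construction, every object is $\mathsf{BE}_{8L_{\mathrm{in}}}(|d|)\Vert d$ at consecutive sequence numbers $s$, and Eq.~\eqref{eq:dummyfit} makes each length representable. AEAD correctness then makes Algorithm~\ref{alg:stream-recv} accept each object in order. Canonical chunking makes concatenated DATA payloads equal the admitted messages. FIN follows all queued DATA, so no DATA is rejected as post-FIN.

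\textbf{Active security.} We use a hybrid sequence.
\begin{itemize}
\item $H_0$ is the real active game.
\item $H_1$ makes the close bits depend only on the leaked values. By Proposition~\ref{prop:findrain} and Eq.~\eqref{eq:streamclose}, $\mathsf{cl}_X$ is a function of $(\mathcal E_{\mathrm{cl}},e_X^\star)$, and the private bit $\mathsf{bad}$ never influences lengths or close. So the oracle's visible outputs can be simulated from $L_{\mathrm{BD}}$ plus the honest-sender state. This hybrid is a syntactic check.
\item $H_2$, applied direction by direction, replaces the receiver of $X\to Y$ by the ideal one. It outputs nothing while the delivered stream is a prefix of the honest stream, and stops decryption after the first deviation. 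On the prefix, correctness gives the same outputs. After a deviation, the real inner receiver outputs nonempty plaintext only with probability bounded by the inner FEP's active-security advantage. Deviation after the wrapper layer is handled by INT-CTXT of the wrapper AEAD: an accepted object at $s^R$ that was not honestly produced at $s^R$ under $ak_{X\to Y}$ is a forgery. Sequence-number uniqueness and key separation cover replays, reorders, and reflections.
\item $H_3$ invokes active security of the inner FEP with public output lengths. This replaces the released ciphertext of direction $X\to Y$ with uniform bytes of the same length, treating the whole wrapper as a chosen-plaintext source. Independence of the four keys lets us apply this one direction at a time. The reduction simulates the other direction and the wrapper of this direction with its own keys.
\end{itemize}
Because outputs are uniform in $H_3$, no hybrid needs IND\$ for the wrapper. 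IND\$ is needed only if we also want the inner-plaintext contents, or the DUMMY/DATA distinction, to be hidden against an adversary holding the inner key, which does not happen. The resulting bound is a sum over the two directions of inner-FEP advantages plus wrapper INT-CTXT advantages.

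\textbf{Integrity.} We bound $\Pr[\mathsf{Bad}\mid b=0]$ by the same reductions:
\begin{itemize}
\item accepted forged DATA or FIN is a wrapper INT-CTXT forgery or an inner active-security break;
\item DATA after FIN is syntactically rejected by the $\neg\mathsf{rfin}$ guard;
\item premature close requires $\mathsf{rfin}$ set without an honest FIN, which is a forgery, or $\mathsf{finDrained}$ before release, which Proposition~\ref{prop:findrain} excludes.
\end{itemize}

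\textbf{Main obstacle.} The hard step is $H_2$/$H_3$: the inner FEP's active game must be used while the wrapper's plaintext stream depends on adversarially chosen inputs and on private receiver state in the other direction. Cross-direction isolation resolves this. The only cross-direction value is the close conjunction, which $H_1$ has already made a function of leakage. So the reduction must be checked to never need the inner key of the challenged direction to compute anything visible.
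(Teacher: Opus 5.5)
Your proposal is correct and follows essentially the paper's route: a per-direction hybrid of the inner channel to uniform strings, with lengths fixed by Proposition~\ref{prop:shaping}; integrity reduced to an inner active-security break or a wrapper INT-CTXT forgery at the expected sequence number; and a union bound over the two directions. Your observation that wrapper IND\$ is never invoked agrees with the paper's concrete bound, which has no IND\$ term. One small point: $H_1$ does not discharge the obstacle you flag, because the leaked $e^\star_Y$ is itself an output the reduction must compute, so appealing to it is circular. What discharges it is that the reduction chose the wrapper plaintext, so it can compute $\mathsf{rfin}_{X\to Y}$ on the synchronized prefix from the public record lengths $|u|+c_{\mathrm{enc}}$ without the inner key, and after deviation it uses the inner receive oracle.
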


\emph{Proof sketch.}
Hybridize each direction's inner channel to uniform strings; the public
schedule fixes every hybrid's lengths (Proposition~\ref{prop:shaping}), and
all wrapper state (types, sequence numbers, FIN) is inner-layer plaintext,
hence hidden. Half-close timing is likewise hidden because drain is a
release-side event (Proposition~\ref{prop:findrain}). For active traffic, an
accepted out-of-sync DATA or FIN implies either an inner active-security break
or a wrapper AEAD forgery at the still-expected sequence number. Absent such
an event, the wrapper state machine is deterministic in honest inputs, so
post-FIN DATA is rejected and close cannot occur before
Eq.~\eqref{eq:streamclose} holds. A union bound over the two directions
yields the concrete bound in Appendix~\ref{app:games}. \hfill$\square$

\begin{theorem}[Datagram security]
\label{thm:datagram}
Assume the atomic datagram channel is correct and FEP-CCA secure, the two datagram
direction keys are independent of each other and of all datastream keys,
transmitted nonces repeat only with negligible probability, and the schedule
is application-independent. Then the construction of \S\ref{sec:datagram}
provides exact bidirectional datagram shaping, authentic atomic delivery before
peer FIN, replay-safe wrapper integrity, and active BiFEP security with leakage
$L_{\mathrm{BD}}$. Close \emph{liveness} additionally requires timely
acceptance of the required FIN/ACK evidence while both endpoints are live,
eligible control epochs and semantic sequence space remain, and a permitted
close bucket exists (Proposition~\ref{prop:linger}).
\end{theorem}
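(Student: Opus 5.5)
The proof proceeds property by property, with the central work a hybrid argument for active BiFEP security. \emph{Shaping} comes directly from Algorithm~\ref{alg:dg-send}. Every branch calls either $\mathsf{AtomicSendData}_k(M,p)$ or $\mathsf{AtomicSendChaff}_k(p)$, and both return exactly $p=\Gamma_X(t)$ bytes. Backpressure leaves the state unchanged and still emits chaff. After visible close, the coordinator (Algorithm~\ref{alg:epoch}) returns $\epsilon$, which is exactly the ideal-world output for $t>e_X^\star$. The branch taken depends only on private state, so lengths depend only on $\Gamma_X$ and the close bits.

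\emph{Correctness} and \emph{integrity} are handled in the real world, conditioned on a good event $G$. Under $G$, no atomic datagram not previously output by the honest sender of that direction opens to a non-null message under $k_{Y\to X}$, and no two honest nonces collide. By FEP-CCA (INT-CTXT-style) security of the atomic channel, by independence of $k_{A\to B}$ and $k_{B\to A}$ (which makes reflected datagrams forgeries under the wrong key), and by the nonce-collision assumption, $\Pr[\neg G]$ is negligible, via a union bound over the two directions.

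Under $G$, every accepted frame is an honest frame. The exact sets $R_n$ and $R_s$ then give at-most-once acceptance of each semantic position. The flag checks in Algorithm~\ref{alg:dg-recv} reject DATA after FIN, and an ACK is accepted only after $\mathsf{ownFin}$. An honest ACK is generated only after the peer has set $\mathsf{peerFin}$, which requires that an honest FIN was sent, so it truly acknowledges our FIN. Hence $\mathsf{ready}_X$ holds only when all four bits of Eq.~\eqref{eq:dgclose} reflect genuine peer actions. Eq.~\eqref{eq:dglinger} only defers close to a public bucket, so no premature close occurs and $\mathsf{Bad}$ stays negligible. This is the content of Lemma~\ref{lem:dgclose}, which I would invoke. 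Authentic atomic delivery before peer FIN follows from the correctness of $\mathsf{AtomicOpen}_k$ on honest ciphertexts together with the frame encoding of Eq.~\eqref{eq:dgframe}.

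For \emph{indistinguishability}, I plan the following hybrids:
\begin{enumerate}
\item $H_0$ is the real active game.
\item $H_1$ modifies the receivers to decrypt only datagrams that match an honest sender output from the same direction, and to treat everything else as $\mathsf{invalid}$. The gap to $H_0$ is $\Pr[\neg G]$.
\item $H_2$ and $H_3$ replace the $A\to B$ outputs, and then the $B\to A$ outputs, with uniform strings of length $\Gamma_X(t)$. Each step is a reduction to FEP-CCA. The reduction simulates the other direction with its own independent key. It simulates the receiver by its table of honest ciphertexts, which gives the semantic content of matched datagrams. It simulates the epoch logic, including the close bits, which are computed from the private semantic state.
\end{enumerate}
In $H_3$ the only thing that depends on secrets is the set of close epochs, and these are exactly $e_A^\star$ and $e_B^\star$, which are contained in $L_{\mathrm{BD}}$. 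Since the schedule generator is application-independent, $H_3$ equals the ideal view.

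The main obstacle is this FEP-CCA reduction. The active adversary's choices about which datagrams to deliver affect private state, and that state in turn determines which messages the sender encrypts in later epochs: whether each frame is chaff or FIN+ACK, and when the close bit fires. The reduction must keep this feedback consistent without ever decrypting. I would handle it by having the reduction maintain the semantic plaintext of every honest ciphertext it submitted, which is legitimate because it chose those plaintexts. It then updates the state from matched deliveries alone, relying on $G$ so that unmatched deliveries are rejected in both worlds.

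Liveness is stated only conditionally, so I would defer it to Proposition~\ref{prop:linger} rather than prove it here.
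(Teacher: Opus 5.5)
Your proposal is correct and follows essentially the same route as the paper's sketch. The shared ingredients are per-direction hybrids reducing to the atomic channel's FEP-CCA security, with independent direction keys so each reduction can simulate the other direction; unlogged accepted frames treated as FEP-CCA forgeries, which together with nonce collisions bound $\mathsf{Bad}$; the exact sets $R_n,R_s$ for at-most-once acceptance; Lemma~\ref{lem:dgclose} for close safety, with the linger rule only deferring close; and liveness deferred to Proposition~\ref{prop:linger}. Your one addition, conditioning first on the no-forgery event $G$ so that the later hybrids need only the pseudorandomness half of FEP-CCA, is a more explicit write-up of the same argument rather than a different proof.
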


\emph{Proof sketch.}
Hybridize the two atomic channels independently; short chaff is null in both
worlds. A fresh accepted semantic frame not produced by the honest sender is
a FEP-CCA forgery. Nonce and hidden-sequence replay sets make every honest
frame effective at most once, and the FIN/ACK bits are monotone, so the close
predicate of Eq.~\eqref{eq:dgclose} cannot become true without an
authenticated peer FIN, an authenticated ACK of the endpoint's own FIN, and a
locally emitted ACK (Lemma~\ref{lem:dgclose},
Appendix~\ref{app:construction}). Loss may delay liveness and thereby change
the leaked close bucket.
The linger rule is determined by the readiness epoch and public pair
$(\mathcal E_{\mathrm{cl}},L)$; it changes only the realized bucket
$e_X^\star$, adds no further leakage, and cannot be advanced by the
adversary. \hfill$\square$

\begin{proposition}[Linger close liveness]
\label{prop:linger}
Suppose endpoint $X$ becomes ready at epoch $\rho_X$ and closes at
$e_X^\star$. If $Y$ accepts any fresh FIN{+}ACK frame emitted by $X$ in an
epoch $t\in[\rho_X,e_X^\star]$, then $Y$ becomes ready upon acceptance and its
close output follows Eq.~\eqref{eq:dglinger}. Consequently, $X$ can close while
$Y$ remains unready only if no eligible linger frame is accepted because of
loss or modification, a schedule below the 40-byte control minimum, or
semantic-sequence exhaustion.
\end{proposition}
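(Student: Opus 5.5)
The plan is to trace the state of $Y$ through Algorithms~\ref{alg:dg-send} and~\ref{alg:dg-recv}, using monotonicity of the bits in Eq.~\eqref{eq:dgstate}. The first step fixes what $X$'s readiness at $\rho_X$ implies. By Eq.~\eqref{eq:dgclose}, $\mathsf{ownFin}_X$ holds. $\mathsf{peerFin}_X$ holds, so by Lemma~\ref{lem:dgclose} (outside the atomic forgery event) $Y$ honestly emitted a FIN, and hence $\mathsf{ownFin}_Y=1$ already. $\mathsf{ownFinAck}_X$ holds, so $Y$ previously emitted an ACK, which Algorithm~\ref{alg:dg-send} permits only when $\mathsf{peerFin}_Y=1$; at that emission $\mathsf{peerAckSent}_Y$ was set. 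Thus, at any epoch $t\ge\rho_X$, $Y$ already holds $\mathsf{ownFin}_Y$, $\mathsf{peerFin}_Y$ and $\mathsf{peerAckSent}_Y$, and lacks only $\mathsf{ownFinAck}_Y$. This step carries the real content of the argument. It needs the sequencing fact that an ACK-bearing frame is emitted only after $\mathsf{peerFin}$ is set, and that $\mathsf{peerAckSent}$ is set in the same call that releases it. Both facts are read directly off Algorithm~\ref{alg:dg-send}.

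Next, consider any fresh frame emitted by $X$ at $t\in[\rho_X,e_X^\star]$. Since $\mathsf{ownFin}_X\land\mathsf{peerFin}_X$ holds and DATA is impossible after $\mathsf{ownFin}$, the flag is exactly FIN+ACK. When $Y$ accepts it, Algorithm~\ref{alg:dg-recv} runs its checks in order. The nonce and sequence are fresh by hypothesis. $f$ is valid because it has no DATA+FIN. The ACK check passes because $\mathsf{ownFin}_Y=1$, and the DATA-after-FIN check is vacuous. The receiver then sets $\mathsf{ownFinAck}_Y$; setting $\mathsf{peerFin}_Y$ is idempotent. Now Eq.~\eqref{eq:dgclose} holds for $Y$, so the final line records $\rho_Y$ at the acceptance epoch if it was still $\bot$. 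From then on, $\mathsf{cl}_Y$ is given by Eq.~\eqref{eq:dglinger} with this $\rho_Y$. A subtle point is the case $t=e_X^\star$: by Algorithm~\ref{alg:epoch}, $X$'s close is applied only after same-epoch receives, so a frame emitted in the closing epoch is still processed by $Y$.

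For the ``consequently'' part, argue by contraposition. Suppose $X$ closes at $e_X^\star$ and $Y$ is unready. Then no FIN+ACK frame from the window $[\rho_X,e_X^\star]$ was accepted by $Y$. Each epoch in the window falls into one of the cases that make this possible. Either $\Gamma_X(t)<40$, or $s^S_X\ge N_{\mathrm{sess}}$, so that Algorithm~\ref{alg:dg-send} emits chaff (the $\mathsf{need}$ guard). Otherwise a fresh FIN+ACK frame was emitted but not accepted, and by the receiver analysis above an honest frame can be rejected only through loss or modification. In particular, its nonce is fresh except with the negligible nonce-collision probability, and its sequence number is fresh by construction.

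I expect the main obstacle to be the first step: justifying that $\mathsf{peerAckSent}_Y$ is already set using only $X$'s authenticated evidence. This requires invoking Lemma~\ref{lem:dgclose} and conditioning away the forgery event. I would state the proposition as holding outside that negligible event.
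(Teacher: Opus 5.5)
Your proposal is correct and follows the paper's proof. You infer from $X$'s readiness that $Y$ lacks only $\mathsf{ownFinAck}_Y$, observe that every semantic frame $X$ emits in $[\rho_X,e_X^\star]$ is FIN{+}ACK, and conclude that accepting one completes $Y$'s readiness so the linger rule governs its close. Your extra steps fill in what the paper's short proof leaves implicit: the first step via authenticity outside the forgery event, the walk through the receiver's guards, the same-epoch ordering at $t=e_X^\star$, and the contrapositive for the ``consequently'' clause.
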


\begin{proof}
Readiness of $X$ implies
$\mathsf{ownFin}_Y$, $\mathsf{peerFin}_Y$, and
$\mathsf{peerAckSent}_Y$; thus only $\mathsf{ownFinAck}_Y$ may be missing.
By Algorithm~\ref{alg:dg-send}, every semantic datagram emitted by $X$ from
$\rho_X$ to $e_X^\star$ carries FIN{+}ACK. Any fresh such datagram sets
$\mathsf{ownFinAck}_Y$, satisfying Eq.~\eqref{eq:dgclose}; closure then follows
from Eq.~\eqref{eq:dglinger}.
\end{proof}

The guarantees are conditional on the fixed public epoch timing and generator
profile and on $L_{\mathrm{BD}}$, which reveals realized lengths and close
buckets. Because the theorems are parametric in the generator, any
application-independent schedule and bucket distribution satisfies the same
payload guarantees, including fingerprinting-resistant designs
\cite{li2018measuring,holland2022regulator,holland2024detorrent,witwer2022padding,zhan2021website,siby2023evaluating}.
Selecting a distribution that resists statistical traffic analysis is outside
the games (\S\ref{sec:model}).

\section{Implementation}
\label{sec:implementation}


The reference implementation contains approximately 20.4k lines of Rust: 2,195 for the datastream construction, 989 for the datagram construction, 813 shared by both (the public-parameter layer and the AEAD boundary), and 16,375 of evaluation harness, with dependencies pinned by its lockfile. AES-256-GCM provides all AEAD operations, using 32-byte keys, 12-byte nonces, and 16-byte tags.
Datastream nonces are derived from nonrepeating, context-separated counters;
sessions end before counter exhaustion, preventing nonce reuse under a fixed
key. The evaluation harness derives keys deterministically for reproducibility;
production constructors obtain keys from the operating system's
cryptographically secure random-number generator.

\textbf{Datastream instantiation.}
Each direction uses an independent inner key and wrapper key (four keys
total). Wrapper parameters are $L_{\mathrm{in}}=4$, $L_{\mathrm{type}}=1$,
$L_{\max}=1024$, $B_{\mathrm{rec}}=4096$; with the 16-byte wrapper tag and
the 36-byte inner overhead of Eq.~\eqref{eq:inner}, a protected zero-payload
DUMMY costs 21 bytes, and the smallest scheduled epoch is one byte. The
baseline evaluation profile uses the asymmetric schedule
$\Gamma_A{=}1200$, $\Gamma_B{=}1000$ bytes/epoch with close buckets every
fourth epoch over a 64-epoch evaluation horizon.

\textbf{Datagram instantiation.}
The datagram construction adds two direction-specific keys, for six keys when
both constructions are instantiated. Each datagram transmits a freshly
sampled 12-byte nonce, giving $h=28$, a minimum authenticated-null size of 29
bytes, a minimum semantic-datagram size of 40 bytes, and $p-40$ application
bytes at public length $p$. The configured session limit is
$N_{\mathrm{sess}}=2^{32}$ semantic datagrams per direction. The datagram
evaluation reuses the datastream profile's baseline schedule and close grid.

The conformance, active, and passive datagram suites run at linger depth $L{=}0$, which reproduces immediate bucket close; the close-loss sweep of \S\ref{sec:eval-loss} sweeps $L\in\{0,\dots,3\}$.

\textbf{Resource policies.}
The implementation bounds sender queues and per-call receive input at 8\,MiB; oversized inputs produce backpressure without changing state or scheduled output. An incomplete wrapper object is limited to 1\,MiB and 64 receive calls, after which the receiver enters a permanent non-delivering state. DUMMY payloads are generated by ChaCha20~\cite{nir2018rfc} using operating-system entropy, except in the deterministic evaluation harness. 

\textbf{Adapters and coordination.}
For datastreams, TCP and timed adapters wrap the same endpoint objects; socket
reads and writes define neither epochs nor record boundaries. For datagrams,
the UDP adapter invokes one send per datagram. In each epoch, both endpoints
send, both process delivered peer traffic, and absorbing close is applied
last. The harness asserts this order in every trace. Instrumentation and
artifact I/O are excluded from byte accounting.

\section{Evaluation}
\label{sec:evaluation}

\subsection{Methodology and evidence classes}
\label{sec:eval-method}

Our evaluation studies the following questions.
\begin{itemize}
\item \textbf{RQ1 (conformance):} Do the implementations preserve transcripts, schedules, and close state under adversarial delivery?

\item \textbf{RQ2 (active integrity):} Can the tested modifications forge semantics, couple the directions, or force premature close?

\item \textbf{RQ3 (passive diagnostics):} Do finite byte statistics reveal an implementation artifact?

\item \textbf{RQ4 (cost):} What in-memory processing rate and cover expansion does the prototype exhibit, and what pacing error occurs on this host?
\end{itemize}

All experiments ran on Windows 11 (build 26200), an AMD Ryzen 7 7735HS with
16 logical CPUs, and Rust 1.94.0 (MSVC, release profile). 

Deterministic oracle tests assess protocol conformance; TCP/UDP loopback runs
check adapter byte accounting; passive statistical tests inspect byte-level
artifacts; and timed runs measure scheduler behavior on the evaluation host.
The cryptographic claims follow from the analysis of \S\ref{sec:security}.
\begin{table}[t]
\caption{Evaluation outcomes. Rows are non-additive; active rows are
subsets, and datagram active trials comprise 60 mutations plus 60 replays.}
\label{tab:harnesses}
\centering
\footnotesize
\setlength{\tabcolsep}{3pt}
\begin{tabular*}{\columnwidth}{@{\extracolsep{\fill}}lrrl@{}}
\toprule
Harness & Trials & Assertions & Outcome \\
\midrule
Stream conformance    & \nStreamTrials & \nStreamAssertionsTotal & passed \\
Stream mutations      & \nActiveStreamTrials & --- & 0 forged; target stalled \\
TCP mutations         & 30 pairs & 7,200 & 0 forged; target stalled \\
TCP length changes    & 6 pairs & 1,440 & 0 forged; target stalled \\
Timed TCP             & 150 & --- & oracle match \\
Datagram conformance  & \nDgCases{} cases & \nDgAssertions & passed \\
Datagram active       & 60+60 & --- & 0 forged; 0 redelivered \\
Parameter sweeps      & \nSweepPoints{} points & \nSweepAssertions & passed \\
Close-leak witness    & 3 pairs & 9 & naive 3/3; \sys{} 0/3 \\
\bottomrule
\end{tabular*}
\end{table}

\subsection{Datastream conformance and close (RQ1)}
\label{sec:eval-stream}

The deterministic suite comprised 143 scenarios, 491 trials, and
5,321 logical epochs; all 62,119 assertions passed
(Table~\ref{tab:streammatrix}). Correctness cases sweep application chunks
of 0--16{,}507 bytes across the $L_{\max}{=}1024$ and
$B_{\mathrm{rec}}{=}4096$ boundaries under direct delivery, one-byte and
boundary-straddling fragments, seeded random fragments, one-epoch delay,
and coalescing after a hold. All 10,642 per-direction live-epoch
measurements matched their scheduled lengths exactly.
\begin{table}[t]
\caption{Deterministic datastream conformance results. Normal/stall reports
sessions completing normally versus sessions in which the targeted direction
stalled; Fail counts assertion failures.}
\label{tab:streammatrix}
\centering
\footnotesize
\begin{tabular*}{\columnwidth}{@{\extracolsep{\fill}}lrrrr@{}}
\toprule
Category & Scenarios & Trials & Normal/stall & Fail \\
\midrule
Correctness/shape   & 54 & 54  & 54/0   & 0 \\
Exact schedule      & 43 & 43  & 43/0   & 0 \\
Close behavior      & 19 & 19  & 19/0   & 0 \\
Workloads           & 9  & 9   & 9/0    & 0 \\
Wrapper progression & 5  & 5   & 5/0    & 0 \\
Inner fail-stop     & 1  & 1   & 1/0    & 0 \\
Active integrity    & 12 & 360 & 0/360  & 0 \\
\bottomrule
\end{tabular*}
\end{table}

Nineteen close scenarios cover one-sided, simultaneous, staggered, and duplicate
close, sparse/dense buckets, backlog, FIN at record boundaries, post-FIN
writes, and a no-close control. Three invariants held throughout: a close
visible in epoch $t$ never shortened epoch $t$'s output; peer FIN caused
neither application EOF nor visible transport close; and closed endpoints
emitted zero bytes thereafter. Additional tests confirmed the failure
discipline of \S\ref{sec:stream}: wrapper failure is call-local and
non-advancing, whereas inner-record failure and resource quarantine are
permanently non-delivering.

\subsection{Active integrity (RQ2)}
\label{sec:eval-active}

To test active integrity, we ran a series of paired trials, in which an identical
application stream was either delivered normally or subjected to one 
or more active attacks, and the results were compared.  Across 360 datastream trials---insertion, deletion, truncation,
duplication, replay, reflection, injection before/within/after the stream,
single- and multi-bit flips, and FIN-adjacent flips, rotated across both
directions and simultaneous targeting---none delivered forged DATA or FIN or
caused premature close. In each single-direction mutation, the
reverse-direction workload completed unchanged as expected; the targeted direction could
stall.

The datastream suite replays the workload of~\cite{fenske2024bytes} using loopback
TCP sockets: 30 control/mutation pairs (2 directions $\times$ 5 target
emissions $\times$ 3 bit offsets). Each mutation caused authentication or
parsing failure in the targeted direction, while the other direction
completed. In six length-changing attacks (3-byte insertion, deletion, and
truncation in each direction), insertion and deletion failed authentication;
truncation left the receiver awaiting the missing suffix. 

Across 60 datagram mutation trials, bit flips, truncation, extension, and
uniform replacement prevented delivery of the target datagram, while datagrams in the
reverse direction and a later independent datagram always delivered. In 60
replay trials, the transmitted-nonce check rejected every previously
accepted ciphertext, producing no duplicate application output.

\subsection{Passive diagnostics (RQ3)}
\label{sec:eval-passive}

\begin{table}[t]
\caption{Passive byte diagnostics. Entropy is empirical byte entropy;
correlation is the maximum absolute byte correlation over lags 1--16 for
datastreams and lag 1 for datagrams. The 16-byte duplicate test applies only
to datastreams.}
\label{tab:passive}
\centering
\footnotesize
\setlength{\tabcolsep}{3pt}
\begin{tabular}{llrrr}
\toprule
 & Class & Entropy (bits/B) & $|$corr$|$ & 16-B dup. \\
\midrule
\multirow{4}{*}{\rotatebox{90}{DS}}
 & DATA A$\to$B & 7.999343 & 0.004071 & 0 \\
 & DATA B$\to$A & 7.999279 & 0.004668 & 0 \\
 & DUMMY A$\to$B & 7.999457 & 0.003168 & 0 \\
 & DUMMY B$\to$A & 7.999294 & 0.003974 & 0 \\
\midrule
\multirow{2}{*}{\rotatebox{90}{DG}}
 & DATA & 7.999749 & 0.001978 & --- \\
 & CHAFF & 7.999724 & 0.000219 & --- \\
\midrule
\multicolumn{2}{l}{Classifier} & ROC-AUC & \multicolumn{2}{r}{95\% CI} \\
\midrule
\multicolumn{2}{l}{datastream DATA vs.\ ChaCha20} & 0.485 &
 \multicolumn{2}{r}{[0.428, 0.542]} \\
\multicolumn{2}{l}{datastream DATA vs.\ DUMMY} & 0.459 &
 \multicolumn{2}{r}{[0.402, 0.516]} \\
\bottomrule
\end{tabular}
\end{table}

Table~\ref{tab:passive} reports byte-level diagnostics from 32 datastream
sessions of eight epochs per class and 64 datagram sessions. Empirical byte
entropy ranged from 7.999279 to 7.999749 bits/byte, and the largest absolute
correlation was 0.004668. No tested 16-byte datastream block repeated, and all
1,024 authenticated datagram nonces were distinct. The linear nearest-centroid
classifiers used session-disjoint even/odd training/test splits (256
observations per class in each split), byte-histogram, bit-balance,
lag-correlation, run, and compression-proxy features, and normal-approximation
intervals adjusted for two comparisons. Table~\ref{tab:passive} reports their
held-out ROC-AUC values; neither interval excludes 0.5. These finite-sample
diagnostics detected no byte-level implementation artifact and do not
establish computational indistinguishability.

\subsection{Processing cost and cover bandwidth (RQ4)}
\label{sec:eval-perf}

\begin{table}[t]
\caption{In-memory processing results (lower sample medians, 30
repetitions). Stream and datagram rows represent different operations and are
not a relative-overhead comparison.}
\label{tab:perf}
\centering
\footnotesize
\setlength{\tabcolsep}{3.2pt}
\begin{tabular}{lrrrr}
\toprule
Operation & App./op & Output/op & $\mu$s/op & Goodput \\
 & \multicolumn{2}{c}{(bytes)} & & (MiB/s) \\
\midrule
\sys{} stream epoch & \nPerfBdInputPerEpoch & \nPerfBdOutputPerEpoch & \nPerfBdLat & \nPerfBdGood \\
\dgsys{} atomic send/receive & 600 & 1,200 & \nDgPerfLatUs & \nDgPerfGood \\
\bottomrule
\end{tabular}
\end{table}

To test throughput and computational cost of our prototype, we ran an experiment using prebuilt 600/500-byte inputs and timed both endpoints' send and receive paths,
including cover generation, wrapper and lower-layer cryptography, parsing,
and state updates, excluding evaluation traces and assertions, network I/O,
and payload generation.  Across \nPerfBdEpochs{} live epochs, each operation
delivered \nPerfBdInputPerEpoch{} bytes, emitted exactly
\nPerfBdOutputPerEpoch{} scheduled bytes, and emptied both queues.
The timing results are shown in Table~\ref{tab:perf}, where App./op is admitted application data, Output/op is scheduled protocol output, and goodput is App./op divided by elapsed
operation time. For each stream operation,   The lower
median was \nPerfBdLat\,$\mu$s per epoch, \nPerfBdGood{} MiB/s of application
goodput, and \nPerfBdThpt{} MiB/s of shaped output. The distinct datagram
operation sends and receives one 600-byte message at a 1,200-byte public
length; it costs \nDgPerfLatUs\,$\mu$s and yields \nDgPerfGood{} MiB/s.

\begin{figure}[t]
\centering
\resizebox{0.9\columnwidth}{!}{%
\begin{tikzpicture}
\pgfplotsset{set layers}
\begin{semilogyaxis}[
  width=0.78\columnwidth, height=4.4cm,
  scale only axis,
  xlabel={offered load (fraction of 2{,}200-byte schedule)},
  ylabel={cover expansion ($\times$)},
  xmin=0, xmax=1, ymin=1, ymax=16,
  ytick={1,2,4,8,16},
  yticklabels={1,2,4,8,16},
  axis y line*=left,
  xlabel near ticks, ylabel near ticks,
  tick label style={font=\scriptsize},
  label style={font=\scriptsize},
  grid=major, grid style={black!12},
]
\addplot[blue!70!black, thick, mark=*, mark size=1.6pt]
  table[x=load, y=expansion] {biFep_NDSSD_Linger/figures/load_sweep.dat};
\addplot[black!45, dashed, domain=0.08:1, samples=60] {1/x};
\node[font=\tiny, text=black!55, anchor=west] at (axis cs:0.13,13.2) {$1/\mathrm{load}$};
\end{semilogyaxis}
\begin{axis}[
  width=0.78\columnwidth, height=4.4cm,
  scale only axis,
  xmin=0, xmax=1, ymin=0, ymax=50,
  axis y line*=right, axis x line=none,
  ylabel={goodput (MiB/s)},
  ylabel near ticks,
  tick label style={font=\scriptsize},
  label style={font=\scriptsize},
]
\addplot[red!70!black, thick, mark=square*, mark size=1.5pt]
  table[x=load, y=goodput] {biFep_NDSSD_Linger/figures/load_sweep.dat};
\end{axis}
\node[font=\scriptsize, text=blue!70!black] at (2.1,3.6) {expansion};
\node[font=\scriptsize, text=red!70!black] at (5.6,1.1) {goodput};
\end{tikzpicture}
}
\caption{Cover expansion and delivered goodput versus offered load on the
fixed 1,200/1,000-byte schedule (10 repetitions $\times$ 256 epochs). Each
epoch emits 2,200 bytes; expansion reaches \nLoadExpMaxUtil$\times$.}
\label{fig:load}
\end{figure}

\textbf{Cover expansion versus load.}
To test the ciphertext expansion of our prototype, we ran a series of trials sweeping the {\em offered load} -- application data supplied per epoch divided by the
scheduled traffic per epoch -- against our fixed 2,200-byte schedule; at nonzero load, 
we calculated expansion as the scheduled output divided by delivered
application data.
Figure~\ref{fig:load} shows the reults. At idle,
all 2,200 bytes/epoch are cover. Expansion is $2.0\times$ at half load and
falls to \nLoadExpMaxUtil$\times$ at the largest tested load; the excess
comprises wrapper/inner framing, schedule headroom, and packing slack. Every
point emitted the prescribed per-epoch lengths.


\textbf{Scaling with schedule size.}
For symmetric schedules $\Gamma_A=\Gamma_B=\Gamma$, a sweep from 200 to 12{,}800 bytes/epoch at 50\% load increased median epoch cost from \nSchedLatMin\,$\mu$s to \nSchedLatMax\,$\mu$s. Thus, a $64\times$ increase in scheduled bytes produced an approximately $13.9\times$ increase in median time, reducing measured time per scheduled byte.

\subsection{Close under loss (RQ1/RQ2)}
\label{sec:eval-loss}

\begin{table}[t]
\caption{Datagram close under loss. Both endpoints request close at epoch 2, with close buckets every four epochs; the indicated $k$ FIN/ACK epochs are dropped. Entries show first visible close epochs $A/B$, with $\bot$ denoting no close by epoch 32. Lower blocks repeat the patterns for $L=1,2,3$.}
\label{tab:loss}
\centering
\footnotesize
\setlength{\tabcolsep}{2pt}
\begin{tabular*}{\columnwidth}{@{\extracolsep{\fill}}lcccccc@{}}
\toprule
$k$ dropped & 1 & 2 & 3 & 4 & 5 & 6 \\
\midrule
\multicolumn{7}{@{}l}{\emph{Immediate bucket close ($L{=}0$)}} \\
FIN loss, symmetric
  & \nLossSymOne & \nLossSymTwo & \nLossSymThree
  & \nLossSymFour & \nLossSymFive & \nLossSymSix \\
FIN loss, one-sided
  & \nLossOneOne & \nLossOneTwo & \nLossOneThree
  & \nLossOneFour & \nLossOneFive & \nLossOneSix \\
ACK loss, one-sided
  & \nLossAckOne & \nLossAckTwo & \nLossAckThree
  & \nLossAckFour & \nLossAckFive & \nLossAckSix \\
\midrule
\multicolumn{7}{@{}l}{\emph{One-bucket linger ($L{=}1$)}} \\
FIN loss, symmetric
  & \nLossLingerSymOne & \nLossLingerSymTwo & \nLossLingerSymThree
  & \nLossLingerSymFour & \nLossLingerSymFive & \nLossLingerSymSix \\
FIN loss, one-sided
  & \nLossLingerOneOne & \nLossLingerOneTwo & \nLossLingerOneThree
  & \nLossLingerOneFour & \nLossLingerOneFive & \nLossLingerOneSix \\
ACK loss, one-sided
  & \nLossLingerAckOne & \nLossLingerAckTwo & \nLossLingerAckThree
  & \nLossLingerAckFour & \nLossLingerAckFive & \nLossLingerAckSix \\
\midrule
\multicolumn{7}{@{}l}{\emph{Two-bucket linger ($L{=}2$)}} \\
FIN loss, symmetric
  & \nLossLingerTwoSymOne & \nLossLingerTwoSymTwo & \nLossLingerTwoSymThree
  & \nLossLingerTwoSymFour & \nLossLingerTwoSymFive & \nLossLingerTwoSymSix \\
FIN loss, one-sided
  & \nLossLingerTwoOneOne & \nLossLingerTwoOneTwo & \nLossLingerTwoOneThree
  & \nLossLingerTwoOneFour & \nLossLingerTwoOneFive & \nLossLingerTwoOneSix \\
ACK loss, one-sided
  & \nLossLingerTwoAckOne & \nLossLingerTwoAckTwo & \nLossLingerTwoAckThree
  & \nLossLingerTwoAckFour & \nLossLingerTwoAckFive & \nLossLingerTwoAckSix \\
\midrule
\multicolumn{7}{@{}l}{\emph{Three-bucket linger ($L{=}3$)}} \\
FIN loss, symmetric
  & \nLossLingerThreeSymOne & \nLossLingerThreeSymTwo & \nLossLingerThreeSymThree
  & \nLossLingerThreeSymFour & \nLossLingerThreeSymFive & \nLossLingerThreeSymSix \\
FIN loss, one-sided
  & \nLossLingerThreeOneOne & \nLossLingerThreeOneTwo & \nLossLingerThreeOneThree
  & \nLossLingerThreeOneFour & \nLossLingerThreeOneFive & \nLossLingerThreeOneSix \\
ACK loss, one-sided
  & \nLossLingerThreeAckOne & \nLossLingerThreeAckTwo & \nLossLingerThreeAckThree
  & \nLossLingerThreeAckFour & \nLossLingerThreeAckFive & \nLossLingerThreeAckSix \\
\bottomrule
\end{tabular*}
\end{table}

Because datagram FIN/ACK evidence may be lost, we test whether loss can delay visible close without causing premature termination. Both endpoints request close at epoch~2, while an adversary drops the first $k$ FIN- or ACK-carrying epochs in one or both directions; we then record each endpoint's first visible close epoch.
FIN and ACK bits persist on every semantic datagram
(Algorithm~\ref{alg:dg-send}), and a ready endpoint retransmits them through
its linger window (Eq.~\eqref{eq:dglinger}).
Table~\ref{tab:loss} reports
$L\in\{0,\dots,3\}$; the no-loss controls ($k{=}0$) close at
\nLossSymZero{}, \nLossLingerSymZero{}, \nLossLingerTwoSymZero{}, and
\nLossLingerThreeSymZero{}, respectively. FIN-phase loss
delayed closure in public-bucket steps. The symmetric and one-sided FIN rows
coincide because the ACK exchange completes within one epoch after a FIN is
delivered.

With $L{=}0$, one-sided ACK loss let $B$ close while $A$ remained open: $A$
did not receive the ACK for its FIN before $B$ entered absorbing close. With
$L{=}1$, every tested pattern with $k\le5$ ended in bilateral close, with the
endpoints closing at most one bucket apart. At $k{=}6$, the one-sided ACK-loss
pattern suppressed every ACK-bearing datagram from $B$ before its epoch-8
close, leaving $A$ open
(Figure~\ref{fig:linger} traces the $k{=}2$ executions of the first two
blocks).
Each unit of $L$ delayed every otherwise successful close, including the
lossless control, by one public bucket, and the $L{=}2$ and $L{=}3$ blocks
left no endpoint open at any tested $k$: on this grid the acknowledged
endpoint closes at epoch $4L{+}4$, so the one-sided stall requires
$k\ge4L{+}2$, beyond the tested window for $L\ge2$. Every
observed close occurred at a permitted bucket after the four authenticated
conditions of Eq.~\eqref{eq:dgclose}; cases without bilateral close were
liveness failures permitted by Proposition~\ref{prop:linger}.

\subsection{Naive composition versus BiFEP}
\label{sec:eval-close-leak}

\begin{table}[t]
\caption{Half-close leakage. Close requests occur at the starts of epochs 2
and 6; silence onset is $A\to B\,/\,B\to A$. Witness reports successful
distinctions among three deterministic pairs.}
\label{tab:naive}
\centering
\footnotesize
\begin{tabular*}{\columnwidth}{@{\extracolsep{\fill}}lccc@{}}
\toprule
 & Silence onset & Half-close inference & Witness \\
\midrule
Naive composition
  & 2\,/\,6 & epochs 2 and 6 & 3/3 \\
\sys{} (this work)
  & 9\,/\,9 & none & 0/3 \\
\bottomrule
\end{tabular*}
\end{table}

We compare \sys{} with a naive composition of two unidirectional FEPs under the same staggered-close workload and measure whether per-direction silence reveals private half-close times. The sweep instantiates the sender-side witness of Proposition~\ref{prop:naive} with two independent inner-FEP senders. Each
meets its requested length while open and emits zero after local half-close
and drain. Close requests occur at epochs 2 and 6 on the baseline 1,200/1,000
schedule with grid $\{4,8,\dots\}$.

Table~\ref{tab:naive} gives the deterministic outcome: naive composition
exposes both half-close times and their order, whereas \sys{} keeps both
directions scheduled through the shared close bucket. Valid inner-FEP
ciphertext alone does not provide the joint idle-cover, half-close, and
coordinated-close properties of
Definitions~\ref{def:passive}--\ref{def:active}.

\subsection{Wall-clock scheduling fidelity}
\label{sec:eval-timed}

\begin{figure}[t]
\centering
\resizebox{0.9\columnwidth}{!}{%
\begin{tikzpicture}
\pgfplotsset{set layers}
\begin{axis}[
  width=0.75\columnwidth, height=4.2cm,
  scale only axis,
  xlabel={public epoch duration (ms)},
  ylabel={epochs started ${>}2$\,ms late (\%)},
  xmode=log, log ticks with fixed point,
  xmin=4, xmax=125, ymin=0, ymax=100,
  xtick={5,10,20,50,100},
  axis y line*=left,
  xlabel near ticks, ylabel near ticks,
  tick label style={font=\scriptsize},
  label style={font=\scriptsize},
  grid=major, grid style={black!12},
]
\addplot[blue!70!black, thick, mark=*, mark size=1.6pt]
  table[x=duration, y=missA] {biFep_NDSSD_Linger/figures/timed_sweep.dat};
\addplot[blue!45, thick, dashed, mark=o, mark size=1.6pt]
  table[x=duration, y=missB] {biFep_NDSSD_Linger/figures/timed_sweep.dat};
\addplot[black!60, thick, dotted, mark=diamond*, mark size=1.9pt]
  table[x=duration, y=overrun] {biFep_NDSSD_Linger/figures/timed_sweep.dat};
\node[font=\scriptsize, text=blue!70!black, anchor=west] at (axis cs:23,93) {late start ($A$, $B$)};
\node[font=\scriptsize, text=black!60, anchor=west] at (axis cs:5.6,30) {overrun};
\end{axis}
\begin{axis}[
  width=0.75\columnwidth, height=4.2cm,
  scale only axis,
  xmode=log, xmin=4, xmax=125, ymin=0, ymax=10,
  axis y line*=right, axis x line=none,
  ylabel={median start jitter (ms)},
  ylabel near ticks,
  tick label style={font=\scriptsize},
  label style={font=\scriptsize},
]
\addplot[red!70!black, thick, mark=square*, mark size=1.5pt]
  table[x=duration, y=jitter] {biFep_NDSSD_Linger/figures/timed_sweep.dat};
\node[font=\scriptsize, text=red!70!black, anchor=west] at (axis cs:55,4.2) {jitter};
\end{axis}
\end{tikzpicture}
}
\caption{Wall-clock fidelity on this host (30 trials $\times$ 10 epochs,
both endpoints; 2\,ms late-start threshold). No overrun was observed at 50
or 100\,ms; late starts occurred at every tested period.}
\label{fig:timed}
\end{figure}

While our formal definition of traffic shaping is in terms of traffic per epoch,
a deployed implementation would adhere to a wall-clock schedule.  To measure
the ability of our prototype to meet wall-clock start times, we ran two independently clocked endpoint
tasks over loopback TCP at fixed periods of 5, 10, 20, 50, and 100\,ms. Start
jitter is actual minus nominal start time, so positive values denote delay. An epoch overruns if
it completes after the next nominal start. All 150 ten-epoch trials matched the logical
oracle, and late epochs were neither skipped nor merged. Across periods,
\nTimedMissRange\% of endpoint-epochs started more than 2\,ms late, and median
start jitter was \nTimedJitterRange\,ms. Overruns affected
\nTimedOverFive\% of 5\,ms epochs, \nTimedOverTen\% of 10\,ms epochs, and
0.33\% of 20\,ms epochs; none were observed at 50 or 100\,ms. These
measurements characterize only the tested host and scheduler.



\subsection{Comparison with deployed transports}
\label{sec:comparison}

\begin{table*}[t]
\caption{Specification-level property comparison, not an implementation
benchmark. ``Partial'' denotes optional, mode-dependent, or incomplete
support. The illustrative unit is protocol-specific, excludes IP/TCP/UDP
headers, and is not comparable across rows.}
\label{tab:comparison}
\centering
\scriptsize
\begin{tabular}{lcccccccccl}
\toprule
Protocol & Uniform & Active & DATA/ctl. & Bidir. & Idle & Hidden & Scheduled & Cross-dir. & Illustrative & Evidence \\
 & payload & integrity & indist. & shaping & cover & half-close & close & isolation & unit & \\
\midrule
TLS 1.3 & No & Yes & Partial & No & No & No & No & No & 22 B & \cite{rfc8446} \\
QUIC v1 & No & Partial & Partial & No & No & Partial & No & Partial & Initial $\ge$1200 B & \cite{rfc9000,rfc9312,thomson2021rfc} \\
WireGuard & No & Yes & No & No & No & N/A & N/A & Partial & 32 B keepalive & \cite{donenfeld2017wireguard} \\
Shadowsocks 2022 & Partial & Yes & Partial & No & No & No & No & Partial & $\ge$69 B & \cite{shadowsocks2022sip022} \\
obfs4 & Partial & Yes & Partial & No & No & No & No & No & $\ge44$ B (profile) & \cite{angel2014obfs4,fenske2024bytes} \\
Tor & No & Yes & Partial & No & Partial & Partial & No & No & 514-B cell & \cite{torproject2026specs} \\
Naive $2\times$UD-FEP & Yes & Yes & Yes & No & No & No & No & Yes & 1 requested B & \S\ref{sec:eval-close-leak} \\
\sys{} datastream & Yes & Yes & Yes & Yes & Yes & Yes & Yes & Yes & 1 requested B & \S\ref{sec:security} \\
\dgsys{} & Yes & Yes & Yes & Yes & Yes & Yes & Yes & Yes & 0 B chaff / 40 B ctl. & \S\ref{sec:security} \\
\bottomrule
\end{tabular}
\end{table*}

Table~\ref{tab:comparison} applies the properties of \S\ref{sec:properties}
at the specification level and includes naive composition as a calibration
row. Independent unidirectional FEPs provide per-direction passive security
and integrity, together with isolation from the independently keyed reverse
direction, but not the joint shaping and close properties measured in
Table~\ref{tab:naive}.

Deployed transports expose different public structure or make padding and
cover optional: TLS~1.3 retains outer record metadata and treats a bad record
MAC as connection-fatal~\cite{rfc8446}; QUIC exposes public header fields and
Initial-packet processing~\cite{rfc9000,rfc9312}; and WireGuard exposes
message types and counters~\cite{donenfeld2017wireguard}. Shadowsocks~2022
and obfs4 do not mandate idle cover in both directions or scheduled close
\cite{shadowsocks2022sip022,angel2014obfs4}, while Tor padding is
configurable~\cite{torproject2026specs}. No surveyed specification mandates
the conjunction of bidirectional shaping, idle cover in both directions, hidden
half-close, and scheduled close.

\section{Discussion and Limitations}
\label{sec:discussion}
\textbf{Deployment requirements.}
Our construction assumes authenticated shared keys and agreed public parameters;
as we discuss in section~\ref{sec:related}, other work has considered 
fully-encrypted key exchange and integrating such definitions with our 
work and prototype remains an interesting question for future work.
We note that given fresh shared session material and domain-separated derivation, the
endpoints can obtain a common sampled schedule and close grid without an
additional in-session message. Note that any in-band negotiation must itself satisfy
the declared leakage boundary.  Full deployments would need to additionally provide
congestion control, NAT traversal, fragmentation-safe sizing, and
application-data reliability within the public schedule.

\textbf{Costs and operating points.}
BiFEP’s schedule determines its bandwidth, latency, and processing costs. Provisioning capacity above the application load increases cover traffic (Figure~\ref{fig:load}), while longer scheduling intervals amortize processing overhead but may increase buffering delay (\S\ref{sec:eval-perf}). Very short intervals can also exceed the implementation’s processing capacity: on our evaluation host, we observed deadline overruns at 5, 10, and 20,ms, but none at 50 or 100,ms (Figure~\ref{fig:timed}). A deployment should therefore select its schedule according to expected load, latency requirements, and available processing capacity.

\textbf{Close latency versus liveness.}
The public datagram linger depth $L$ trades close latency for loss resilience: each increment delays visible close, including in lossless sessions, by one bucket of $\mathcal E_{\mathrm{cl}}$ and extends retransmission through the intervening eligible epochs. If $n$ eligible FIN+ACK datagrams in this window are lost independently with probability $q$, the probability that all are lost is $q^n$ (\S\ref{sec:eval-loss}). A public application-level timeout can bound resource use, but yields an abort rather than authenticated full close.

\textbf{Replay memory.}
Our implementation detects replays by storing every accepted nonce or sequence number for the session. This permits arbitrary packet reordering, but memory grows linearly with the number of accepted semantic datagrams. Although $N_{\mathrm{sess}}=2^{32}$ bounds the number of semantic frames per direction, it is not intended as a feasible size for this replay set. A deployment can instead use a bounded DTLS-style anti-replay window~\cite{rescorla2022rfc}, which limits memory at the cost of rejecting packets reordered beyond the configured window.

\section{Related Work}
\label{sec:related}

\textbf{FEPs and obfuscation transports.}
Our work builds directly on Fenske and Johnson’s definitions and unidirectional constructions \cite{fenske2024bytes}. Our new contributions are a bidirectional FEP model that captures both traffic directions and shared termination, a formal treatment of close and half-close behavior, and a datagram construction using authenticated FIN/ACK state. Deployed obfuscation transports, including obfs4, Shadowsocks, and ScrambleSuit, seek random-looking wire images but are not specified under the FEP formal definitions
\cite{angel2014obfs4,shadowsocks2022sip022,winter2013scramblesuit}. Prior work on obfuscated transports and their detectability shows that protocol mimicry and ad hoc randomization can leave classifier-visible artifacts~\cite{houmansadr2013parrot,wang2015seeing}. Other works have shown that application- and transport-level characteristics~\cite{xue2024fingerprinting,hanlon2024detecting,wang2025custom} and packet-level reactions~\cite{fifield2023comments} can lead to distinguishers, motivating our explicit treatment of lengths, reactions, and close behavior. 
G\"unther {\em et al.}~\cite{gunther2024obfuscated,gunther2025hybrid} have recently studied the obfuscation of key exchange and KEM-based public key encryption, but those works also do not explicitly treat message ordering, lengths, reactions, or close behavior.

\textbf{Channel security and traffic analysis.}
Previous work on bidirectional channel security~\cite{marson2017security}, security under ciphertext fragmentation~\cite{fischlin2015data,boldyreva2012security},and secure termination~\cite{boyd2017secure} studied state, fragmentation, and close behavior without requiring a random wire shape. Website-fingerprinting attacks and defenses instead select an emission schedule and quantify its cost~\cite{shmatikov2006timing,cai2014systematic,zhan2021website,siby2023evaluating,mathews2023sok}.
These defenses mitigate leakage intentionally left outside the BiFEP abstraction; the security of the resulting public schedule distribution must therefore be evaluated separately. TLS/QUIC wire-image management~\cite{rfc8446,rfc9000,rfc9312} and Encrypted Client Hello (ECH), which encrypts privacy-sensitive TLS handshake metadata~\cite{rfc9849,rfc9848}, reduce selected protocol-visible information but do not seek a uniform wire image. BiFEP is complementary: it targets leakage from the subsequent bidirectional traffic pattern and session termination.

\section{Conclusion}
\label{sec:conclusion}

We introduced BiFEPs, to the best of our knowledge, the first framework for fully encrypted bidirectional communication that treats the two traffic directions and session termination as a single security object. We formalized its security requirements and constructed both datastream and datagram BiFEPs, including mechanisms for traffic shaping, authenticated protocol state, and privacy-preserving termination.

We evaluated these constructions through a Rust implementation and a specification-level analysis of existing protocols. The implementation passed a suite of security tests, including adversarial mutation, dropping and replay tests, while achieving cover expansion from \(2.0\times\) at half load to \(\nLoadExpMaxUtil\times\) near saturation. We also examined existing deployed transports against the BiFEP requirements and found that none of the surveyed protocols satisfies the full property set.

Together, these results show that bidirectional fully encrypted communication requires more than independently protecting two unidirectional channels: traffic behavior and termination must be coordinated across the session. BiFEPs provides a formal model and a concrete realization of this stronger notion, combining bidirectional traffic shaping, active integrity, and private termination.

\section*{Ethical Considerations}
This work aims to protect network-session metadata against surveillance. Our evaluation used only synthetic traffic and systems under our control; it involved no human subjects, user data, or third-party networks. We acknowledge that there is the potential for harms as a result of the deployment of our protocols due to the dual-use nature of private communication technologies.  However, under the principal of beneficience, we believe that the  potential benefits of increased privacy and greater access to information, freedom of expression, and freedom of association outweigh these harms.

\bibliographystyle{IEEEtran}
\bibliography{biFep_NDSSD_Linger/ref}

\begin{thebibliography}{10}
\providecommand{\url}[1]{#1}
\csname url@samestyle\endcsname
\providecommand{\newblock}{\relax}
\providecommand{\bibinfo}[2]{#2}
\providecommand{\BIBentrySTDinterwordspacing}{\spaceskip=0pt\relax}
\providecommand{\BIBentryALTinterwordstretchfactor}{4}
\providecommand{\BIBentryALTinterwordspacing}{\spaceskip=\fontdimen2\font plus
\BIBentryALTinterwordstretchfactor\fontdimen3\font minus \fontdimen4\font\relax}
\providecommand{\BIBforeignlanguage}[2]{{%
\expandafter\ifx\csname l@#1\endcsname\relax
\typeout{** WARNING: IEEEtran.bst: No hyphenation pattern has been}%
\typeout{** loaded for the language `#1'. Using the pattern for}%
\typeout{** the default language instead.}%
\else
\language=\csname l@#1\endcsname
\fi
#2}}
\providecommand{\BIBdecl}{\relax}
\BIBdecl

\bibitem{rfc8446}
\BIBentryALTinterwordspacing
E.~Rescorla, ``{The Transport Layer Security (TLS) Protocol Version 1.3},'' RFC 8446, Aug. 2018. [Online]. Available: \url{https://www.rfc-editor.org/info/rfc8446}
\BIBentrySTDinterwordspacing

\bibitem{rfc9000}
\BIBentryALTinterwordspacing
J.~Iyengar and M.~Thomson, ``{QUIC: A UDP-Based Multiplexed and Secure Transport},'' RFC 9000, May 2021. [Online]. Available: \url{https://www.rfc-editor.org/info/rfc9000}
\BIBentrySTDinterwordspacing

\bibitem{rfc9312}
\BIBentryALTinterwordspacing
M.~K{\"u}hlewind and B.~Trammell, ``{Manageability of the QUIC Transport Protocol},'' RFC 9312, Sep. 2022. [Online]. Available: \url{https://www.rfc-editor.org/info/rfc9312}
\BIBentrySTDinterwordspacing

\bibitem{angel2014obfs4}
\BIBentryALTinterwordspacing
Y.~Angel, ``{obfs4 -- The obfourscator},'' accessed: Aug. 16, 2026. [Online]. Available: \url{https://github.com/Yawning/obfs4}
\BIBentrySTDinterwordspacing

\bibitem{shadowsocks2022sip022}
\BIBentryALTinterwordspacing
{Shadowsocks Contributors}, ``{SIP022: AEAD-2022 Ciphers},'' accessed: Aug. 16, 2026. [Online]. Available: \url{https://shadowsocks.org/doc/sip022.html}
\BIBentrySTDinterwordspacing

\bibitem{wu2023great}
M.~Wu, J.~Sippe, D.~Sivakumar, J.~Burg, P.~Anderson, X.~Wang, K.~Bock, A.~Houmansadr, D.~Levin, and E.~Wustrow, ``How the great firewall of china detects and blocks fully encrypted traffic,'' in \emph{32nd USENIX Security Symposium (USENIX Security 23)}, 2023, pp. 2653--2670.

\bibitem{wang2015seeing}
L.~Wang, K.~P. Dyer, A.~Akella, T.~Ristenpart, and T.~Shrimpton, ``Seeing through network-protocol obfuscation,'' in \emph{Proceedings of the 22nd ACM SIGSAC Conference on Computer and Communications Security}, 2015, pp. 57--69.

\bibitem{houmansadr2013parrot}
A.~Houmansadr, C.~Brubaker, and V.~Shmatikov, ``The parrot is dead: Observing unobservable network communications,'' in \emph{2013 IEEE Symposium on Security and Privacy}.\hskip 1em plus 0.5em minus 0.4em\relax IEEE, 2013, pp. 65--79.

\bibitem{fenske2024bytes}
E.~Fenske and A.~Johnson, ``Bytes to schlep? use a fep: Hiding protocol metadata with fully encrypted protocols,'' in \emph{Proceedings of the 2024 on ACM SIGSAC Conference on Computer and Communications Security}, 2024, pp. 1982--1996.

\bibitem{boyd2017secure}
C.~Boyd and B.~Hale, ``Secure channels and termination: The last word on tls,'' in \emph{International Conference on Cryptology and Information Security in Latin America}.\hskip 1em plus 0.5em minus 0.4em\relax Springer, 2017, pp. 44--65.

\bibitem{postel1980rfc0768}
J.~Postel, ``Rfc0768: User datagram protocol,'' 1980.

\bibitem{thomson2021rfc}
M.~Thomson and S.~Turner, ``Rfc 9001: Using tls to secure quic,'' 2021.

\bibitem{donenfeld2017wireguard}
J.~A. Donenfeld, ``Wireguard: next generation kernel network tunnel.'' in \emph{NDSS}, 2017, pp. 1--12.

\bibitem{torproject2026specs}
\BIBentryALTinterwordspacing
{The Tor Project}, ``Tor specifications,'' accessed 2026-07-29. [Online]. Available: \url{https://spec.torproject.org/}
\BIBentrySTDinterwordspacing

\bibitem{fischlin2015data}
M.~Fischlin, F.~G{\"u}nther, G.~A. Marson, and K.~G. Paterson, ``Data is a stream: Security of stream-based channels,'' in \emph{Annual Cryptology Conference}.\hskip 1em plus 0.5em minus 0.4em\relax Springer, 2015, pp. 545--564.

\bibitem{boldyreva2012security}
A.~Boldyreva, J.~P. Degabriele, K.~G. Paterson, and M.~Stam, ``Security of symmetric encryption in the presence of ciphertext fragmentation,'' in \emph{Annual International Conference on the Theory and Applications of Cryptographic Techniques}.\hskip 1em plus 0.5em minus 0.4em\relax Springer, 2012, pp. 682--699.

\bibitem{marson2017security}
G.~A. Marson and B.~Poettering, ``Security notions for bidirectional channels,'' \emph{IACR Transactions on Symmetric Cryptology}, pp. 405--426, 2017.

\bibitem{li2018measuring}
S.~Li, H.~Guo, and N.~Hopper, ``Measuring information leakage in website fingerprinting attacks and defenses,'' in \emph{Proceedings of the 2018 ACM SIGSAC Conference on Computer and Communications Security}, 2018, pp. 1977--1992.

\bibitem{holland2022regulator}
J.~K. Holland and N.~Hopper, ``Regulator: A straightforward website fingerprinting defense,'' \emph{Proceedings on Privacy Enhancing Technologies}, vol.~2, pp. 344--362, 2022.

\bibitem{holland2024detorrent}
J.~K. Holland, J.~Carpenter, S.~E. Oh, and N.~Hopper, ``Detorrent: An adversarial padding-only traffic analysis defense,'' \emph{Proceedings on Privacy Enhancing Technologies}, vol.~1, pp. 98--115, 2024.

\bibitem{witwer2022padding}
E.~Witwer, J.~K. Holland, and N.~Hopper, ``Padding-only defenses add delay in tor,'' in \emph{Proceedings of the 21st workshop on privacy in the electronic society}, 2022, pp. 29--33.

\bibitem{eggert2017rfc}
L.~Eggert, G.~Fairhurst, and G.~Shepherd, ``Rfc 8085: Udp usage guidelines,'' 2017.

\bibitem{zhan2021website}
P.~Zhan, L.~Wang, and Y.~Tang, ``Website fingerprinting on early quic traffic,'' \emph{Computer Networks}, vol. 200, p. 108538, 2021.

\bibitem{siby2023evaluating}
S.~Siby, L.~Barman, C.~Wood, M.~Fayed, N.~Sullivan, and C.~Troncoso, ``Evaluating practical quic website fingerprinting defenses for the masses,'' \emph{Proceedings on Privacy Enhancing Technologies}, 2023.

\bibitem{nir2018rfc}
Y.~Nir and A.~Langley, ``Rfc 8439: Chacha20 and poly1305 for ietf protocols,'' 2018.

\bibitem{rescorla2022rfc}
E.~Rescorla, H.~Tschofenig, and N.~Modadugu, ``Rfc 9147: The datagram transport layer security (dtls) protocol version 1.3,'' 2022.

\bibitem{winter2013scramblesuit}
P.~Winter, T.~Pulls, and J.~Fuss, ``Scramblesuit: A polymorphic network protocol to circumvent censorship,'' in \emph{Proceedings of the 12th ACM workshop on Workshop on privacy in the electronic society}, 2013, pp. 213--224.

\bibitem{xue2024fingerprinting}
\BIBentryALTinterwordspacing
D.~Xue, M.~Kallitsis, A.~Houmansadr, and R.~Ensafi, ``Fingerprinting obfuscated proxy traffic with encapsulated {TLS} handshakes,'' in \emph{USENIX Security Symposium}.\hskip 1em plus 0.5em minus 0.4em\relax USENIX, 2024. [Online]. Available: \url{https://www.usenix.org/system/files/sec24summer-prepub-465-xue.pdf}
\BIBentrySTDinterwordspacing

\bibitem{hanlon2024detecting}
\BIBentryALTinterwordspacing
M.~Hanlon, G.~Wan, A.~Ascheman, and Z.~Durumeric, ``Detecting {VPN} traffic through encapsulated {TCP} behavior,'' in \emph{Free and Open Communications on the Internet}, 2024. [Online]. Available: \url{https://www.petsymposium.org/foci/2024/foci-2024-0016.pdf}
\BIBentrySTDinterwordspacing

\bibitem{wang2025custom}
W.~Wang, D.~Xue, P.~Kumar, A.~Mishra, R.~Ensafi \emph{et~al.}, ``Is custom congestion control a bad idea for circumvention tools?'' \emph{Free and Open Communications on the Internet}, 2025.

\bibitem{fifield2023comments}
\BIBentryALTinterwordspacing
D.~Fifield, ``Comments on certain past cryptographic flaws affecting fully encrypted censorship circumvention protocols,'' Cryptology {ePrint} Archive, Paper 2023/1362, 2023. [Online]. Available: \url{https://eprint.iacr.org/2023/1362}
\BIBentrySTDinterwordspacing

\bibitem{gunther2024obfuscated}
\BIBentryALTinterwordspacing
F.~G{\"u}nther, D.~Stebila, and S.~Veitch, ``Obfuscated key exchange,'' in \emph{Proceedings of the 2024 on ACM SIGSAC Conference on Computer and Communications Security}, ser. CCS '24.\hskip 1em plus 0.5em minus 0.4em\relax New York, NY, USA: Association for Computing Machinery, 2024, p. 2385–2399. [Online]. Available: \url{https://doi.org/10.1145/3658644.3690220}
\BIBentrySTDinterwordspacing

\bibitem{gunther2025hybrid}
F.~G{\"u}nther, M.~Rosenberg, D.~Stebila, and S.~Veitch, ``Hybrid obfuscated key exchange and kems,'' in \emph{Annual International Cryptology Conference}.\hskip 1em plus 0.5em minus 0.4em\relax Springer, 2025, pp. 575--609.

\bibitem{shmatikov2006timing}
V.~Shmatikov and M.-H. Wang, ``Timing analysis in low-latency mix networks: Attacks and defenses,'' in \emph{European Symposium on Research in Computer Security}.\hskip 1em plus 0.5em minus 0.4em\relax Springer, 2006, pp. 18--33.

\bibitem{cai2014systematic}
X.~Cai, R.~Nithyanand, T.~Wang, R.~Johnson, and I.~Goldberg, ``A systematic approach to developing and evaluating website fingerprinting defenses,'' in \emph{Proceedings of the 2014 ACM SIGSAC conference on computer and communications security}, 2014, pp. 227--238.

\bibitem{mathews2023sok}
N.~Mathews, J.~K. Holland, S.~E. Oh, M.~S. Rahman, N.~Hopper, and M.~Wright, ``Sok: A critical evaluation of efficient website fingerprinting defenses,'' in \emph{2023 IEEE Symposium on Security and Privacy (SP)}.\hskip 1em plus 0.5em minus 0.4em\relax IEEE, 2023, pp. 969--986.

\bibitem{rfc9849}
\BIBentryALTinterwordspacing
E.~Rescorla, K.~Oku, N.~Sullivan, and C.~A. Wood, ``{TLS Encrypted Client Hello},'' RFC 9849, Mar. 2026. [Online]. Available: \url{https://www.rfc-editor.org/info/rfc9849}
\BIBentrySTDinterwordspacing

\bibitem{rfc9848}
\BIBentryALTinterwordspacing
B.~M. Schwartz, M.~Bishop, and E.~Nygren, ``{Bootstrapping TLS Encrypted ClientHello with DNS Service Bindings},'' RFC 9848, Mar. 2026. [Online]. Available: \url{https://www.rfc-editor.org/info/rfc9848}
\BIBentrySTDinterwordspacing

\end{thebibliography}
\appendices
\section*{Generative AI Assistance Disclosure}
Large language model assistants (OpenAI Codex and Anthropic Claude) assisted with paper polishing, implementation, and evaluation. The authors reviewed and are responsible for every claim, proof, and artifact.

\section*{Open Science}
The source code and artifacts required to reproduce our evaluation are
available in an anonymous repository at~\url{https://anonymous.4open.science/r/biFep-C235/README.md}.

\section{Formal Security Games and Concrete Bounds}
\label{app:games}

\subsection{Correctness}

Correctness has four parts. \emph{Datastream preservation} requires the
concatenated delivered chunks in each direction to equal the concatenated
admitted DATA. \emph{Eventual delivery} is conditioned on reliable in-order
delivery and continued live schedule invocations. \emph{Half-close
correctness} requires exactly one authenticated FIN after prior DATA and no
later DATA in that direction. \emph{Close correctness} permits visible close
only after local and peer half-close readiness and only in
$\mathcal E_{\mathrm{cl}}$.

For a datagram protocol, one send produces one atomic datagram and one
receive consumes one. Loss, duplication, and reordering are allowed.
An intact, honestly emitted DATA datagram first accepted before authenticated
peer FIN returns its exact payload atomically; honest control advances only
its encoded bits, and detected replays return null. Loss, duplication, or
reordering across FIN may suppress delivery but cannot change accepted
semantics. Every sender output respects the UDP payload bound. Close liveness
is conditioned on acceptance of the required FIN/ACK evidence while both
endpoints remain live and semantic sequence space remains
(\S\ref{sec:eval-loss}).
The linger phase of Eq.~\eqref{eq:dglinger} widens this live window: a ready
endpoint remains live, retransmitting FIN/ACK, through $L$ further public
close buckets (Proposition~\ref{prop:linger}).

\subsection{Passive challenge}

The following oracle realizes Definition~\ref{def:passive} without choosing
ideal lengths from secret real-world outputs. After fixing the public
parameters, the challenger runs $\Init(1^\lambda,\Gamma_A,\Gamma_B,
\mathcal E_{\mathrm{cl}})$ and maintains the resulting real shadow execution.
Calls below update the stored endpoint states in
place; we omit those state outputs for readability. Queries use strictly increasing epochs; both sends run
before both honest receives, and close is applied only after all four calls,
as in Algorithm~\ref{alg:epoch}. Let $\mathsf{live}_X^t$ record whether $X$
was live at the beginning of epoch $t$. For challenge bit $b$, the epoch
oracle is given in Algorithm~\ref{alg:app-passive-epoch}.

\begin{algorithm}[t]
\caption{Passive epoch oracle
$\mathcal O^{b}_{\mathsf{Epoch}}(t,\mu_A,\mu_B)$}
\label{alg:app-passive-epoch}
{\footnotesize
\begin{algorithmic}[1]
\For{$X\in\{A,B\}$}
  \State $(c^0_X,\mathsf{cl}_X^S)\gets\Send_X(t,\mathsf{st}_X,\mu_X)$
\EndFor
\State $(\cdot,\mathsf{cl}_B^R)\gets\Recv_B(t,\mathsf{st}_B,c^0_A)$
\State $(\cdot,\mathsf{cl}_A^R)\gets\Recv_A(t,\mathsf{st}_A,c^0_B)$
\For{$X\in\{A,B\}$}
  \State apply $\mathsf{cl}_X^S\lor\mathsf{cl}_X^R$ absorbingly
  \State \textbf{if} $b=0$ \textbf{then} $c^b_X\gets c^0_X$
  \State \textbf{if} $b=1\land\mathsf{live}_X^t$ \textbf{then}
         $c^b_X\gets\Rand(\Gamma_X(t))$
  \State \textbf{if} $b=1\land\neg\mathsf{live}_X^t$ \textbf{then}
         $c^b_X\gets\epsilon$
\EndFor
\State \Return $(c^b_A,c^b_B)$
\end{algorithmic}
}
\end{algorithm}
When $\mathcal A$ requests finalization, the challenger discloses the two
first absorbing close epochs $(e_A^\star,e_B^\star)$ (or $\bot$) before
obtaining its guess. Since
$\mathsf{live}_X^t=1$ exactly through $e_X^\star$, the random branch is a
function of $L_{\mathrm{BD}}$ alone and gives exactly $V_1$ of
Definition~\ref{def:passive}. In particular, it does not mirror a
secret-dependent real-output length.

\subsection{Active datastream challenge}

For a direction $d=X\to Y$, the active challenger stores the real shadow
sent stream $C^{0,S}_d$, the stream $C^S_d$ shown to $\mathcal A$, the stream
$C^R_d$ submitted for receipt, and a monotone synchronization bit
$\mathsf{sync}_d$. Algorithm~\ref{alg:app-active-stream-game} defines the
experiment.

\begin{algorithm}[t]
\caption{Active datastream experiment
$\mathsf{ActStr}^{b}_{\Pi,\mathcal A}$}
\label{alg:app-active-stream-game}
{\footnotesize
\begin{algorithmic}[1]
\State $(\Gamma_A,\Gamma_B,\mathcal E_{\mathrm{cl}},\sigma)
       \gets\mathcal A(1^\lambda)$
\State $(\mathsf{st}_A,\mathsf{st}_B)\gets
       \Init(1^\lambda,\Gamma_A,\Gamma_B,\mathcal E_{\mathrm{cl}})$
\State $\mathsf{Bad}\gets0$; initialize the integrity monitor below
\For{$d\in\{A\to B,B\to A\}$}
  \State $C^{0,S}_d,C^S_d,C^R_d\gets\epsilon$;
         $\mathsf{sync}_d\gets1$
\EndFor
\State run $\mathcal A^{\mathcal O^b_S,\mathcal O^b_R}(\sigma)$ until finalization
\State disclose $(e_A^\star,e_B^\star)$; obtain $\mathcal A$'s guess $b'$
\State \Return $\mathbf 1[b'=b]$
\end{algorithmic}
}
\end{algorithm}
Legal queries follow the epoch order of Algorithm~\ref{alg:epoch}: one send
per live endpoint, followed by one receive per endpoint (where $\epsilon$
models no delivery), after which the challenger applies the combined close
bits absorbingly. Arbitrary fragmentation, coalescing, delay, insertion, and
deletion are expressed by the receive strings across epochs. With
$p_X(t)=\Gamma_X(t)$ for an endpoint live at the start of $t$, and $p_X(t)=0$
otherwise, the send oracle is Algorithm~\ref{alg:app-active-stream-send}.

\begin{algorithm}[t]
\caption{Active datastream send oracle $\mathcal O^b_S(X,t,\mu)$}
\label{alg:app-active-stream-send}
{\footnotesize
\begin{algorithmic}[1]
\State $Y\gets\overline X$
\State $(c^0,\mathsf{cl}_X^S)\gets\Send_X(t,\mathsf{st}_X,\mu)$
\State \textbf{if} $b=0$ \textbf{then} $c\gets c^0$
       \textbf{ else} $c\gets\Rand(p_X(t))$
\State $C^{0,S}_{X\to Y}\gets C^{0,S}_{X\to Y}\Vert c^0$
\State $C^S_{X\to Y}\gets C^S_{X\to Y}\Vert c$
\State \Return $(c,\mathsf{cl}_X^S)$
\end{algorithmic}
}
\end{algorithm}
Thus $|C^{0,S}_d|=|C^S_d|$ at every point. If a synchronized fragment $u$
occupies positions $i,\ldots,j$ in $C^S_d$, define
$\mathsf{Map}_d(u)=C^{0,S}_d[i..j]$. Let
$\widehat{\mathsf{cl}}_X(t)$ be the close bit computed from the shadow state at
the point of return, after any preceding mapped-honest update. The receive oracle below sets
$d=Y\to X$, and $u$ is the longest prefix of $c$ for which
$C^R_d\Vert u\preceq C^S_d$; write $c=u\Vert v$. The receive oracle is
Algorithm~\ref{alg:app-active-stream-recv}.

\begin{algorithm}[t]
\caption{Active datastream receive oracle $\mathcal O^b_R(X,t,c)$}
\label{alg:app-active-stream-recv}
{\footnotesize
\begin{algorithmic}[1]
\State $Y\gets\overline X$; $d\gets Y\to X$
\If{$\neg\mathsf{sync}_d$}
  \State $C^R_d\gets C^R_d\Vert c$
  \State \textbf{if} $b=1$ \textbf{then}
         \Return $([\,],\widehat{\mathsf{cl}}_X(t))$
  \State $(\mathcal L,\mathsf{cl})\gets\Recv_X(t,\mathsf{st}_X,c)$
  \State \textbf{if} $\mathcal L\ne[\,]$ \textbf{then}
         $\mathsf{Bad}\gets1$
  \State \Return $(\mathcal L,\mathsf{cl})$
\EndIf
\State compute $u,v$ as above
\State \textbf{if} $u\ne\epsilon$ \textbf{then}
       $u^0\gets\mathsf{Map}_d(u)$; $\Recv_X(t,\mathsf{st}_X,u^0)$
\State $C^R_d\gets C^R_d\Vert c$
\State \textbf{if} $v=\epsilon$ \textbf{then}
       \Return $([\,],\widehat{\mathsf{cl}}_X(t))$
\State $\mathsf{sync}_d\gets0$
\State \textbf{if} $b=1$ \textbf{then}
       \Return $([\,],\widehat{\mathsf{cl}}_X(t))$
\State $(\mathcal L,\mathsf{cl})\gets\Recv_X(t,\mathsf{st}_X,v)$
\State \textbf{if} $\mathcal L\ne[\,]$ \textbf{then}
       $\mathsf{Bad}\gets1$
\State \Return $(\mathcal L,\mathsf{cl})$
\end{algorithmic}
}
\end{algorithm}
All shadow sender and receiver calls in this experiment feed their
instrumented generation and acceptance traces to the integrity monitor in
the next subsection. Accordingly, synchronized plaintext is suppressed in both worlds. At the
first deviation only its common prefix is mapped and processed in the random
world; that direction then remains permanently out of sync. The real-world
branch continues to process deviating bytes, while the random branch returns
only the shadow close bit. The challenger also runs the semantic integrity
monitor below; hence forged control state and early close set
$\mathsf{Bad}$ even when $\mathcal L=[\,]$. Active security requires both the
distinguishing advantage and $\Pr[\mathsf{Bad}=1\mid b=0]$ to be negligible.

\subsection{Wrapper-state integrity game}

The integrity game makes the semantic monitor explicit. It instruments the
honest implementation only to expose generated and accepted protected frames;
this instrumentation is not part of the protocol interface. For every
direction $d$ and authenticated semantic sequence $s$, the send oracle stores
the complete honestly generated frame in $\mathsf{Log}_d[s]$.
Algorithm~\ref{alg:app-bd-int} gives the game.

\begin{algorithm}[t]
\caption{Wrapper-state integrity experiment
$\mathsf{Exp}^{\mathrm{BD\mbox{-}INT}}_{\Pi,\mathcal A}$}
\label{alg:app-bd-int}
{\footnotesize
\begin{algorithmic}[1]
\State $(\Gamma_A,\Gamma_B,\mathcal E_{\mathrm{cl}},\sigma)
       \gets\mathcal A(1^\lambda)$
\State $(\mathsf{st}_A,\mathsf{st}_B)\gets
       \Init(1^\lambda,\Gamma_A,\Gamma_B,\mathcal E_{\mathrm{cl}})$
\State $\mathsf{Bad}\gets0$
\State \textbf{for all} $d,s$: $\mathsf{Log}_d[s]\gets\bot$
\State \textbf{for all} $d$: $\mathsf{Used}_d\gets\emptyset$
\State $\mathcal A^{\mathcal O_S,\mathcal O_R}(\sigma)$
\State \Return $\mathsf{Bad}$
\end{algorithmic}
}
\end{algorithm}
On $\mathcal O_{\mathsf{Send}}(X,t,\mu)$, the challenger runs the honest
sender and records every protected frame $F$ created by the call as
$\mathsf{Log}_{X\to\overline X}[s]\gets F$ at its authenticated sequence
$s$.
If the call reports a first visible close when the corresponding
predicate in Eq.~\eqref{eq:streamclose} or Eq.~\eqref{eq:dglinger} is false,
it sets $\mathsf{Bad}\gets1$, then returns the ordinary ciphertext and close
bit.

On $\mathcal O_{\mathsf{Recv}}(X,t,c)$, let $d=\overline X\to X$. The
challenger runs the honest receiver and examines its instrumented acceptance
trace in order. For each accepted pair $(s,F)$, it applies
Algorithm~\ref{alg:app-integrity-monitor}.

\begin{algorithm}[t]
\caption{Per-frame wrapper-integrity monitor for accepted $(s,F)$}
\label{alg:app-integrity-monitor}
{\footnotesize
\begin{algorithmic}[1]
\Require direction $d$, accepted $(s,F)$, semantic state before and after $F$
\State \textbf{if} $\mathsf{Log}_d[s]\ne F$ \textbf{then}
       $\mathsf{Bad}\gets1$
\State \textbf{if} $s\in\mathsf{Used}_d$ \textbf{then}
       $\mathsf{Bad}\gets1$ \textbf{ else}
       $\mathsf{Used}_d\gets\mathsf{Used}_d\cup\{s\}$
\State \textbf{if} $F$ delivers DATA after accepted peer FIN
       \textbf{ then} $\mathsf{Bad}\gets1$
\State \textbf{if} $F$ changes FIN/ACK state without its logged flag
       \textbf{ then} $\mathsf{Bad}\gets1$
\end{algorithmic}
}
\end{algorithm}
After the trace, if the returned delivery list is not exactly the ordered
payloads of the accepted logged DATA frames, the oracle sets
$\mathsf{Bad}\gets1$. Finally, if this call reports the endpoint's first visible close while its
deterministic close predicate is false or $t\notin\mathcal E_{\mathrm{cl}}$,
the oracle sets $\mathsf{Bad}\gets1$, and returns the ordinary delivery and
close bit. Thus the monotone flag covers frame, DATA, FIN/ACK, replay,
post-FIN-DATA, and early-close violations. In the datastream active game it
is ORed with the non-null-after-deviation event shown above.

For the datastream construction, let $\mathsf{InnerBad}$ be an unauthentic
inner plaintext output and $\mathsf{AEADForge}$ a fresh accepted wrapper
ciphertext. If neither occurs, every accepted $(s,F)$ equals the unique
logged frame at that direction and sequence, after which the deterministic
state machine makes all remaining checks hold. Hence
\begin{equation}
 \Pr[\mathsf{Bad}]
 \le \Pr[\mathsf{InnerBad}]+\Pr[\mathsf{AEADForge}],
\end{equation}
Writing $\Adv^{\mathrm{BD\mbox{-}INT}}_{\Pi,\mathcal A}$ for the resulting
bad-event probability, $\Adv^{\mathrm{FEP\mbox{-}CCFA}}$ for the inner
datastream FEP's active-security advantage, and $\mathcal B_1,\mathcal B_2$
for the reductions derived from $\mathcal A$, guessing the attacked direction
with real/random normalization gives the concrete (non-tight) bound
\begin{align}
 \Adv^{\mathrm{BD\mbox{-}INT}}_{\Pi,\mathcal A}
 &\le 4\Adv^{\mathrm{FEP\mbox{-}CCFA}}_{\Pi_{\mathrm{UD}},\mathcal B_1}\nonumber\\
 &\quad+2\Adv^{\mathrm{INT\mbox{-}CTXT}}_{\mathrm{AEAD},\mathcal B_2}
 +\mathsf{negl}(\lambda).
\end{align}
For the datagram construction, an unlogged fresh accepted semantic frame is
instead an atomic FEP-CCA forgery; $R_n$ and $R_s$ make a logged replay null.
Its bad probability is therefore bounded by the sum of the two directional
FEP-CCA forgery probabilities and the transmitted-nonce collision
probability.

\subsection{Active datagram challenge}

Datagrams use atomic rather than prefix synchronization. The send oracle is
the active send oracle above, except that for each direction it records an
initially empty multimap $T_d$ from each shown $c$ to its real shadow
counterpart $c^0$.
A lookup may select any matching counterpart. Multiple
matches among short chaff are harmless because every counterpart decodes to
null; a collision involving an authenticated datagram has negligible
probability. Algorithm~\ref{alg:app-active-dg-recv} defines the receive
oracle.

\begin{algorithm}[t]
\caption{Active datagram receive oracle
$\mathcal O^{b,\mathrm{DG}}_R(X,t,c)$}
\label{alg:app-active-dg-recv}
{\footnotesize
\begin{algorithmic}[1]
\State $Y\gets\overline X$; $d\gets Y\to X$
\If{$\exists c^0\in T_d[c]$}
  \State select any matching counterpart $c^0\in T_d[c]$
  \State $(\mathsf{st}_X,\cdot,\mathsf{cl})\gets\Recv_X(t,\mathsf{st}_X,c^0)$
  \State \Return $(\bot,\mathsf{cl})$
\EndIf
\State \textbf{if} $b=1$ \textbf{then}
       \Return $(\bot,\widehat{\mathsf{cl}}_X(t))$
\State $(\mathsf{st}_X,x,\mathsf{cl})\gets
       \Recv_X(t,\mathsf{st}_X,c)$
\State \Return $(x,\mathsf{cl})$
\end{algorithmic}
}
\end{algorithm}
Thus an honestly shown datagram advances the shadow receiver but its semantic
output is suppressed in both worlds. A new adversarial datagram is opened
only in the real world; any accepted DATA/FIN/ACK or early close is recorded
by the same integrity monitor. Independent queries allow recovery after a
modified or lost datagram, unlike permanent datastream desynchronization.
Short chaff and authenticated-null datagrams produce $\bot$ in both worlds,
and repeated honest outputs cannot redeliver DATA because of $R_n$ and $R_s$.
This is exactly the bidirectional product of the FEP-CCA experiment
of~\cite{fenske2024bytes}, with independent direction keys and the two
close-bucket leakage values.

\section{Construction Details and Invariants}
\label{app:construction}

\subsection{Datastream sender and FIN drain}

The sender maintains an inner plaintext queue $\mathsf{buf}$, generated
ciphertext queue $\mathsf{obuf}$, inner record counter $r$, wrapper counter
$s$, total released bytes $\mathsf{out}$, and two optional FIN counters.
$\mathsf{finrem}$ is the inner plaintext distance through the end of the FIN
object; when record generation consumes that position the sender sets
$\mathsf{finlim}=\mathsf{out}+|\mathsf{obuf}|+|d_{\mathrm{UD}}|$, and FIN is drained
exactly when $\mathsf{out}\ge\mathsf{finlim}$. The two-counter distinction
prevents FIN from being considered public merely because it was queued or
encrypted; the entire containing inner ciphertext must have left the
endpoint API.

The receiver first runs the inner FEP, appends returned plaintext to its
wrapper buffer, and repeatedly removes a complete declared object before
authenticating it. On wrapper failure it sets private bad state, leaves the
expected wrapper sequence unchanged, and stops the call. A valid DATA after
peer FIN, a duplicate FIN, a non-empty FIN, or an unknown tag is a semantic
failure and produces neither DATA nor close.

\subsection{Datagram close safety}

\begin{lemma}[Datagram close safety]
\label{lem:dgclose}
No network adversary can cause visible close before an honest endpoint has
requested FIN, authenticated peer FIN and peer ACK, emitted ACK, completed
its public linger window, and reached
a public close epoch, except with the atomic channel's forgery probability.
\end{lemma}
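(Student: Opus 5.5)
The plan is to argue by a game hop followed by a deterministic invariant on the datagram state machine. First, bridge to an idealized world in which every atomic datagram accepted by $\mathsf{AtomicOpen}_{k_{Y\to X}}$ with a non-null output equals the decoding of some ciphertext honestly emitted by $\mathsf{AtomicSendData}_{k_{Y\to X}}$. The two worlds differ only on the event that the adversary submits a fresh ciphertext that authenticates to a non-null frame. This is exactly an atomic FEP-CCA (INT-CTXT-style) forgery. A reduction that simulates $X$'s endpoint and forwards $Y$'s sends to its encryption oracle bounds it. Keys are independent across directions, so each direction is reduced separately and a union bound over the two directions is taken. The transmitted-nonce collision term is added so that honest frames are distinguishable, matching the bound stated after Algorithm~\ref{alg:app-integrity-monitor}.

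Second, conditioned on no forgery, prove the invariant by inspecting Algorithms~\ref{alg:dg-send} and~\ref{alg:dg-recv}; each of the four bits of Eq.~\eqref{eq:dgclose} is monotone and has a single setting site.
\begin{itemize}
\item $\mathsf{ownFin}_X$ is set only in $\Send_X$ on a local $\mathsf{closeReq}$, so it is outside adversarial control.
\item $\mathsf{peerAckSent}_X$ is set only when $X$ itself emits a frame whose flag contains ACK.
\item $\mathsf{peerFin}_X$ is set only on acceptance of an authenticated frame carrying FIN. By the idealization, such a frame is an honest frame of $Y$, and $Y$ sets the FIN flag only after $\mathsf{ownFin}_Y$, i.e.\ after $Y$'s close request.
\item $\mathsf{ownFinAck}_X$ is set only on an accepted ACK-bearing frame, which the idealization makes an honest $Y$ frame emitted with $\mathsf{peerFin}_Y=1$. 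That in turn requires $Y$ to have accepted $X$'s authentic FIN.
\end{itemize}
Replays of honest frames cannot violate this, because $R_n,R_s$ make them null. Even when a replay is accepted, it only re-asserts a bit the honest sender genuinely encoded. Reflection is excluded because each direction uses its own key.

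Third, handle timing. The readiness epoch $\rho_X$ is recorded at the first epoch in which Eq.~\eqref{eq:dgclose} holds, so it is at least the epoch at which the last of the four authentic events occurred. By Eq.~\eqref{eq:dglinger}, $\mathsf{cl}_X(t)=1$ requires $\mathsf{ready}_X$, $t\in\mathcal E_{\mathrm{cl}}$, and $L$ permitted buckets in $[\rho_X,t)$, which is precisely "completed its public linger window and reached a public close epoch." Algorithm~\ref{alg:epoch} applies $\mathsf{closed}_X$ only through these bits, so visible close cannot precede them. The adversary can influence $\rho_X$ only by delaying or dropping datagrams, never advancing it, since each bit requires an authentic event.

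I expect the main obstacle to be the first step: stating the idealization cleanly when the adversary adaptively interleaves sends and receives across both directions. In particular, the argument must ensure that the "non-null output" event matches the FEP-CCA game's forgery notion even though chaff and short datagrams also authenticate or are accepted. I would resolve this by observing that short and authenticated-null datagrams return $\top$ and touch no state, so only non-null outputs need the forgery bound.
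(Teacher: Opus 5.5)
Your proposal is correct and follows essentially the same route as the paper's proof. Like the paper, you check that each of the four monotone readiness bits is set either by a local action ($\mathsf{ownFin}$, $\mathsf{peerAckSent}$) or only by an authenticated incoming frame ($\mathsf{peerFin}$, $\mathsf{ownFinAck}$), and that an unlogged such frame is an atomic FEP-CCA/INT-CTXT forgery; you then note that the linger conjunct of Eq.~\eqref{eq:dglinger} is anchored at the locally recorded $\rho_X$, which the adversary can only delay. Your explicit game hop with a per-direction union bound, and your tracing of an accepted ACK back to the peer's honest acceptance of your FIN, are elaborations the paper leaves implicit rather than a different approach.
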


\begin{proof}
The readiness predicate of Eq.~\eqref{eq:dgclose} is a conjunction of four
local monotone bits. $\mathsf{ownFin}$ is set only by the application
interface. $\mathsf{peerFin}$ and $\mathsf{ownFinAck}$ are set only after an
authenticated semantic frame under the incoming direction key; a novel such
frame is a FEP-CCA/INT-CTXT forgery unless honestly emitted.
$\mathsf{peerAckSent}$ is set only when the endpoint itself releases an
ACK-bearing datagram. The close test additionally checks the public epoch.
The remaining conjunct of Eq.~\eqref{eq:dglinger} counts permitted epochs
since the locally recorded readiness epoch, a value the adversary cannot
manipulate except by delaying readiness itself, which only defers close.
Dropping or replaying traffic cannot set a missing bit.
\end{proof}

Close \emph{liveness} is conditional: readiness requires accepted evidence,
and a visibly closed endpoint stops transmitting. The linger phase retains
FIN/ACK transmission through $L$ further public close buckets when eligible
epochs and sequence space remain. A peer can be stranded if no such frame is
accepted because of loss or modification, a schedule below 40 bytes, or
semantic-sequence exhaustion (\S\ref{sec:eval-loss}). This is the datagram analogue of TCP's
final-ACK problem and its \textsc{time-wait} remedy;
\S\ref{sec:discussion} discusses the residual latency/liveness trade.

\section{Detailed Evaluation Matrices}
\label{app:evaluation}

The correctness sweep used application lengths
$0,1,1023,1024,1025,4011,4012,4013,16507$ under six delivery policies
(direct, one-byte fragments, boundary fragments, seeded random fragments,
one-epoch delay, coalescing after hold). The exact-length sweep covered
selected idle/data boundary lengths between 1 and 8{,}192 bytes, asymmetric
schedules, and a varying asymmetric sequence. Semantic probes checked
wrapper tag failure, higher sequence after failure, unknown authenticated
tags, post-FIN DATA, DATA/FIN tag substitution, inner fail-stop, and
incomplete-length resource quarantine.

The TCP Section-9-style suite used two directions, target emissions 1, 2, 3,
5, and 10, and beginning/middle/end bit positions: 30 paired trials.
Separate insertion, deletion, and truncation runs kept API and
proxy-collected bytes distinct from post-mutation delivered bytes. All
reverse directions completed.

Datagram exact-length tests cover every integer $0\le p\le1500$ plus the
65,507-byte format maximum---dense coverage around realistic MTUs; the
construction's arithmetic enforces the remaining values. Active trials
rotate bit flip, truncation, extension, and uniform replacement; each
checks no target delivery, complete reverse delivery, later target
recovery, and replay rejection.

The parameter sweeps (\S\ref{sec:eval-perf}--\ref{sec:eval-timed}) cover ten
load points, symmetric schedules $\Gamma\in\{200,\dots,12800\}$ at 50\%
utilization, 19 deterministic FIN/ACK drop patterns each under linger depths
$L\in\{0,\dots,3\}$, and 5--100\,ms timed
batches (30 trials $\times$ 10 epochs each), all with oracle checks.

\end{document}